\def\BigRoman{\uppercase\expandafter{\romannumeral\number\count 255 }}
\def\Romannumeral{\afterassignment\BigRoman\count255=} \makeatletter \newcommand*{\rom}[1]{\expandafter\@slowromancap\romannumeral #1@} \makeatother
\newtheorem{Lemma}{Lemma}
\newtheorem{proof pf Lemma 6}{Proof}
\newtheorem{Property}{Property}
\let\MYoriglatexcaption\caption
\renewcommand{\caption}[2][\relax]{\MYoriglatexcaption[#2]{#2}}
\newcommand{\upperRomannumeral}[1]{\uppercase\expandafter{\romannumeral#1}}
\theoremstyle{definition}
\DeclareMathOperator*{\argmax}{arg\,max}
\begin{document}

\title{Design and Analysis of LoS MIMO Systems with Uniform Circular Arrays}

\author{Yuri Jeon, Gye-Tae Gil, ~\IEEEmembership{Member,~IEEE}, and Yong H. Lee, ~\IEEEmembership{Senior Member,~IEEE}
\thanks{This work has been submitted to the IEEE for possible publication. Copyright may be transferred without notice, after which this version may no longer be accessible.}
\thanks{Y. Jeon and Y. H. Lee are with the School of Electrical Engineering, Korea
	Advanced Institute of Science and Technology, Daejeon 34141, Korea (e-mail:
	yuri0703@kaist.ac.kr; yohlee@kaist.ac.kr).}
\thanks{G. T. Gil  is with the Institute for Information Technology Convergence,
	Korea Advanced Institute of Science and Technology, Daejeon 305-701,
	South Korea (e-mail: gategil@kaist.ac.kr).}
}
{}

\maketitle

\begin{abstract}
	We consider the design of a uniform circular array (UCA) based multiple-input multiple-output (MIMO) system over line-of-sight (LoS) environments in which array misalignment exists. In particular, optimal antenna placement in UCAs and  transceiver architectures to achieve the maximum channel capacity without the knowledge of misalignment components are presented. To this end, we first derive a generic channel model of UCA-based LoS MIMO systems in which three misalignment factors including relative array rotation, tilting and center-shift are reflected concurrently. By factorizing the channel matrix into the singular value decomposition (SVD) form, we demonstrate that the singular values of UCA-based LoS MIMO systems are \textit{independent} of tilting and center-shift. Rather, they can be expressed as a function of the \textit{radii product-to-distance ratio} (RPDR) and the angle of relative array rotation. Numerical analyses of singular values show that the RPDR is a key design parameter of UCA systems. Based on this result, we propose an optimal design method for UCA systems which performs a one-dimensional search of RPDR to maximize channel capacity. It is observed that the channel matrix of the optimally designed UCA system  is close to an orthogonal matrix; this fact allows channel capacity to be achieved by a simple zero-forcing (ZF) receiver. Additionally, we propose a low-complexity precoding scheme for UCA systems in which the optimal design criteria cannot be fulfilled because of limits on array size.  The simulation results demonstrate the validity of the proposed design method and transceiver architectures.
\end{abstract}

\begin{IEEEkeywords}
	Uniform circular array, line-of-sight channel, array misalignment.
\end{IEEEkeywords}

\IEEEpeerreviewmaketitle

\section{Introduction}

\IEEEPARstart{D}{ue} to the potential in fixed wireless applications such as cellular backhaul, line-of-sight (LoS) multiple-input multiple-output (MIMO) communication systems with fixed transmitter and receiver locations have been proposed [1]\textendash[12]. In these systems, the LoS MIMO channel is nearly stationary and can be assumed to be deterministic. To provide the high data rates required for such applications, LoS MIMO systems exploit spatial multiplexing gains attained by appropriate spacing between antenna elements.
It has been shown that LoS MIMO systems can achieve a full multiplexing gain through the optimization of antenna placement [2]\textendash[9]. In particular, the orthogonality conditions that make the columns of  LoS MIMO channel matrices orthogonal, are derived in terms of the carrier wavelength, array size, and communication range. 
Furthermore, sensitivity to deviations caused by small misalignments, such as rotations, tilts, or translations has been analyzed [4]\textendash[9]. These results have been derived mostly for uniform linear arrays (ULAs) [4]\textendash[7] and uniform rectangular arrays (URAs) [8]\textendash[9].

Recently, LoS MIMO communication systems with uniform-circular-arrays (UCAs) have been drawing attention due to their advantages over ULAs and URAs [10]\textendash[15]: a salient feature of a UCA-based LoS MIMO system is that its channel matrix can be modeled as a circulant matrix [10]\textendash[12]; thus, the channel can be diagonalized by employing a  discrete Fourier transform (DFT) precoder and an inverse DFT (IDFT) combiner. This property can simplify the design and implementation of UCA-based LoS MIMO systems, and it allows us to regard UCA-based LoS MIMO systems as a candidate scheme for realizing orbital-angular momentum (OAM) transmission [12], [16]\textendash[19]. However, these systems have not been fully investigated, and their use for practical applications is limited. The orthogonality condition for optimal antenna spacing has been derived only for UCAs with three or four antenna elements [11]. In addition, the asymptotic analysis in [10] indicates that a  UCA-based MIMO channel can hardly satisfy the orthogonality condition and has jagged singular values, where some of them are too small to be used for signal transmission. In other words, the condition number of the channel matrix can be large, and it is difficult to exploit the full multiplexing gain. To overcome these difficulties, an optimal design method which is a process of finding the optimal radii of UCAs to maximize channel capacity was presented [13]. However, all of the research efforts consider only a scenario in which the transmitter (Tx) and the receiver (Rx) UCAs are perfectly aligned to each other. 

While there has been a lack of discussion on the optimal design criteria of UCA systems under array misalignment, it has traditionally been  believed that misalignments on UCAs should be compensated to obtain high performance gain. In this regard, channel models of misaligned UCA systems and methods to compensate for those misalignments have been developed [14], [19]. Channel models of center-shifted and tilted UCA systems were presented in [14] and [19], respectively, but a generic channel model that considers all kinds of misalignment concurrently has not yet been developed. Moreover, the misalignment compensation methods of [14] and [19] leverage the phase deviation of channel gain caused by misalignments, but its estimation method, which is an important issue for realization, was not presented. In practice, it is challenging to estimate misalignment angles when they are small or vary rapidly. Therefore, in-depth analysis of sensitivity to misalignment and finding a way to implement UCA systems without  estimating or  compensating misalignment components  are required.

In this work,  we develop an optimal design method for a UCA system  under array misalignment and present transceiver architectures   that achieve the channel capacity without estimating or compensating misalignment components. Here, we first present a generalized channel model of a misaligned UCA system in which all kinds of misalignments including tilting, center-shift, and array rotation, are taken into account. Then, it is shown that the singular values of the misaligned UCA system are independent of tilting and center-shift, but can be expressed as a function of the \textit{radii product-to-distance ratio} (RPDR) and the angle of relative array rotation. Further numerical analyses verify that the singular values are robust to array rotation, but fluctuate with the RPDR. These findings indicate that the RPDR is the key parameter for designing UCA systems. Based on this observation, we developed an optimal design method for UCA systems that performs a one-dimensional search for the RPDR to maximize the channel capacity. It is observed that the channel matrix of the misaligned UCA system is close to an orthogonal matrix when the optimal design criterion is satisfied; therefore, the maximum channel capacity can be achieved by a  simple zero-forcing (ZF) receiver. In addition, we develop a low-complexity precoding scheme for a UCA system in which the optimal RPDR value cannot be fulfilled due to limits on array size. The proposed precoding scheme consists of approximated power allocations, which can be implemented only by the information of the communication distance, and a codebook-based precoding framework where the codebook is designed by quantized angles of center-shift to avoid estimation of misalignment angles. The results indicate that the proposed precoding scheme almost achieves  the channel capacity with a small feedback overhead; thus, it can be a useful alternative to an  optimal precoder that requires either estimation and feedback of misalignment angles or full channel-state information. 

The remainder of this paper is organized as follows. Section II presents the channel model of a misaligned UCA system in which three types of misalignments, i.e., rotation, tilting and center-shift, are taken into account. In Section III, singular value analyses are performed, and the optimal design method of the misaligned UCA system is  presented. The design of a precoder for non-optimal UCA systems is presented in Section IV, and simulation results demonstrating the validity of the proposed design method and transceiver architectures are presented in Section V. Finally, Section VI presents the conclusion.


$\textsl{\itshape Notations}$: Bold upper-case $\mathbf{A}$ denotes a matrix and bold lower-case $\mathbf{a}$ denotes a vector. Superscripts $\mathbf{A}^T$ and $\mathbf{A}^H$ denote the transpose and the conjugate transpose of a matrix $\mathbf{A}$, respectively, and  $a^*$ denotes the complex conjugate of a complex number $a$. The $(n,m)^{th}$ entry and the $k$th column of a matrix $\mathbf{A}$ are denoted as $a(n,m)$ and $\mathbf{A}(:, k)$, respectively, and \textit{diag}(${a}_1,\cdots,{a}_N$) indicates a diagonal matrix whose diagonal entries are given by $\{{a}_1,\cdots,{a}_N\}$. The matrix obtained by taking the magnitudes of the entries of $\mathbf{A}$ is denoted as $|\mathbf{A}|$ (the $(n,m)^{th}$ entry of $|\mathbf{A}|$ is equal to $|a(n,m)|$), and  
$\mathbf{I}_{N}$ denotes the $N-$dimensional identity matrix. The matrices $\mathbf{R}_{\theta}^{xy}$, $\mathbf{R}_{\theta}^{xz}$ and  $\mathbf{R}_{\theta}^{yz}$ are rotation matrices representing the rotation on the $xy-$plane about the $z-$axis, on the $xz-$plane about the $y-$axis, and the on $yz-$plane about the $x-$axis, respectively. For example, 
\begin {equation}
\mathbf{R}_{\theta}^{xz}=\begin{bmatrix}
	\cos{\theta} &  0 & -\sin{\theta}\\
	0 & 1 &  0\\
	\sin{\theta} & 0 & \cos{\theta}\\
\end{bmatrix}.\nonumber
\end{equation}


\section{System Model}

In this section, we present a system model for misaligned UCA systems. It is assumed, without loss of generality, that misalignments are caused by the misplacement of an Rx UCA, given a well-positioned Tx UCA. The misplacement is modelled by the rotation, tilting and center-shift of an Rx UCA. We shall consider these misplacement models one by one, and then present a general model that jointly considers all three types of misplacements. The misalignment model presented in this section is an extension of the model introduced in [14] that considers rotation and center-shift.

\begin{figure}[t]
\centering
\centerline{\includegraphics[width=0.23\textwidth, height=0.15\textwidth]{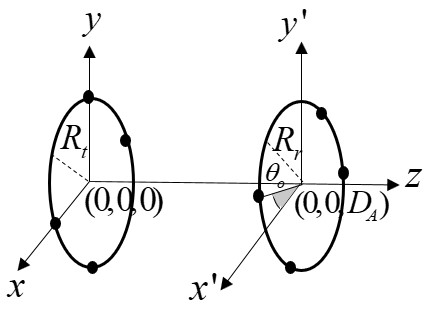}}
\caption{{Perfectly aligned UCA system when $N_s=4$. }}
\end{figure}

\subsection{Aligned UCAs with Rotation}
We consider a UCA system over an LoS channel that employs UCAs with $N_s$ antenna elements at the transmitter (Tx) and the receiver (Rx). The radii of the Tx and Rx UCAs are denoted as $R_t$ and $R_r$, respectively. For the aligned UCAs, we assume that the Tx UCA is located on the $xy-$plane and is centered at the coordinate $(x, y, z) = (0, 0, 0)$; the Rx UCA is located on the $x’y’-$plane, which is parallel to the $xy-$plane, and centered at $(x, y, z) = (0, 0, D_A)$, where $D_A \gg R_t$ and $R_r$. The Tx UCA is assumed to be fixed, and without loss of generality, the first Tx antenna element is located on the $x-$axis. The coordinates of the $m^{th}$ Tx antenna are given by $(R_t \cos\theta_m,R_t \sin\theta_m, 0)$ for $\theta_m = 2 \pi m/N_s$ and $m\in\{1,\cdots,N_s\}$. For the Rx UCA, we allow rotation by $\theta_o$ about the $z-$axis, where $-\pi/N_s\leq\theta_o\leq \pi/N_s$ (Fig. 1). The coordinates of the $n^{th}$ Rx antenna after rotation, denoted as $(x_{\theta_o},y_{\theta_o},z),$ are given by
\begin {equation}
\begin {split}
[x_{\theta_o},y_{\theta_o},z]^T&=\mathbf{R}_{\theta_o}^{xy}[R_r \cos\theta_n, R_r \sin\theta_n, D_A]^T\\&=[R_r\cos(\theta_n+\theta_o),R_r\sin(\theta_n+\theta_o),D_A]^T.
\end {split}
\end {equation}
where $\mathbf{R}_{\theta_o}^{xy}$ is the rotation matrix representing the rotation on the $xy-$plane about the $z-$axis, $\theta_n = 2 \pi n/N_s$, and $n\in\{1,\cdots,N_s\}$. The distance between the $m^{th}$ Tx antenna and the $n^{th}$ Rx antenna, denoted as $\tilde{d}_A(n, m)$, is given by
\begin {equation}
\tilde{d}_A(n,m) = \big\{D_A^2+R_t^2+R_r^2-2R_tR_r\cos(\theta_n-\theta_m+\theta_o)\big\}^{1 \over 2}.
\end {equation}
Because $D_A \gg R_t$ and $R_r$, ignoring the approximation errors, $\tilde{d}_A(n,m)$ can be rewritten as
\begin {equation}
\tilde{d}_A(n,m) = {D_A-{R_tR_r \over D_A}\cos(\theta_n-\theta_m+\theta_o)}.
\end {equation}

\begin{figure}[t]
	\centering
	\centerline{\includegraphics[width=0.15\textwidth, height=0.18\textwidth]{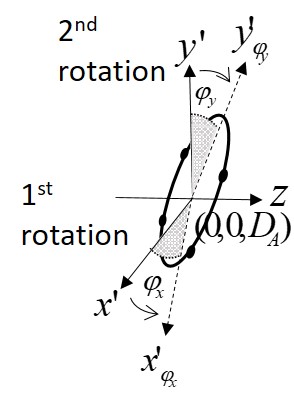}}
	\caption{{ Tilting modeled as a cascade of two rotations.}}
\end{figure}

\subsection{Modeling Rx UCA Tilting}

Referring to Fig. 2, the Rx UCA tilting can be represented as a cascade of two rotations: the UCA rotation about the $x'-$axis by the angle $\varphi_{y}$ followed by the rotation about the $y'_{\varphi_y}-$axis by the angle $\varphi_x$. The first rotation represents the tilt to the $x'z-$plane, and the second rotation represents the tilt to the $y'z-$plane. Because the $x'y'-$plane is parallel to the $xy-$ plane (Fig. 1), the coordinates of an antenna on the Rx UCA after the first rotation are given by $(x, y_{\varphi_y}, z_{\varphi_{y}})$, where ${[x, y_{\varphi_y}, z_{\varphi_y} ]}^T= \mathbf{R}_{\varphi_y}^{yz} [x, y, z]^T$. The coordinate after the second rotation is given by $(x_{\varphi_x}, y_{\varphi_y}, z_{\varphi_{x},\varphi_{y}})$, where ${[x_{\varphi_x},  y_{\varphi_y}, z_{\varphi_{x},\varphi_{y}} ]}^T=\mathbf{R}_{\varphi_x}^{xz} [x, y_{\varphi_y}, z_{\varphi_y}]^T$. Combining these results, the coordinate after the two types of rotations (or tilting) is given by
\begin {equation}
{[x_{\varphi_x}, y_{\varphi_y},  z_{\varphi_{x},\varphi_{y}} ]}^T=
\mathbf{R}_{\varphi_x}^{xz}\mathbf{R}_{\varphi_y}^{yz} [x, y, z]^T.
\end {equation}
When both the rotation considered in (1) and the tilting in (4) occur, the UCA coordinate is given by $(x_{{\theta_o},\varphi_x}, y_{{\theta_o},\varphi_y},  z_{\varphi_{x},\varphi_{y}})$ where 
\begin {equation}
{[x_{{\theta_o},\varphi_x}, y_{{\theta_o},\varphi_y},  z_{\varphi_{x},\varphi_{y}} ]}^T=
\mathbf{R}_{\varphi_x}^{xz}\mathbf{R}_{\varphi_y}^{yz}[x_{\theta_o},y_{\theta_o},z]^T
\end {equation}
for $(x_{\theta_o},y_{\theta_o})$ in (1).
\subsection{Modeling Rx UCA Center Shift}

\begin{figure}
	\centering
	\subfloat[]{\includegraphics[width=0.2\textwidth, height=0.14\textwidth]{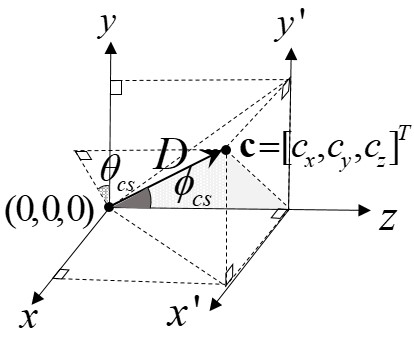}}\quad\quad
	\subfloat[]{\includegraphics[width=0.2\textwidth, height=0.13\textwidth]{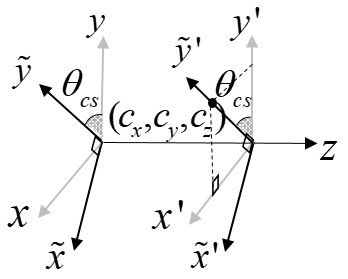}}
	\caption{(a) Modeling the center shift. Ideally, the center is supposed to be located at $(0,0, D_A)$, but it is shifted to $(c_x,c_y,c_z)$ due to misalignment. (b) Rotating the $x'y'-$plane by $\theta_{cs}$ so that $(c_x, c_y, c_z)$ is located on the $\tilde{y}'-$axis.}
\end{figure}

Fig. 3 illustrates the Rx UCA center-shift. Here the origin $(0,0,0)$ and $(c_x,c_y,c_z)$ represent the coordinates of the centers of the Tx and Rx UCAs, respectively. The center of the Rx UCA is supposed to be located at the coordinate $(0, 0, D_A)$, but it is shifted to $(c_x,c_y,c_z)$ because of misalignment. For convenience, we define a vector ${\mathbf{c}} = [c_x,c_y,c_z]^T$ from the origin to the center of the Rx UCA. 
The magnitude of ${\mathbf{c}}$ is equal to $D$, which is the distance between the centers of the Tx and Rx UCAs. The direction of center-shift can be represented by two angles: the polar angle measured from the $z-$axis is denoted by $\phi_{cs}$, and the azimuthal angle of the orthogonal projection of ${\mathbf{c}}$ on the $xy-$plane measured from the $y-$axis is denoted by $\theta_{cs}$.  To represent the coordinate change caused by the center-shift using a single rotation matrix, we introduce a new coordinate system, $(\tilde{x},\tilde{y},z)$, where the $\tilde{x}-$ and $\tilde{y}-$axes are the axes obtained by rotating the ${x}-$ and ${y}-$axes by $\theta_{cs}$ about the $z-$axis. In the same way, we also define the $\tilde{x}'-$ and $\tilde{y}'-$axes from the ${x'}-$ and ${y'}-$axes. Then the center of the Rx UCA is located on the $\tilde{y}'-$axis as shown in Fig. 3(b), and its coordinates are succinctly represented as follows.
\begin{Lemma} In the new coordinate system $(\tilde{x},\tilde{y},z)$, the center of the Rx UCA is located at $(0, D\sin\phi_{cs}, D\cos\phi_{cs})$. 
	\begin{proof}
		The coordinates of the center of the new coordinate system are given by $(\tilde{c}_x,\tilde{c}_y,c_z)$ where
		$[\tilde{c}_x,\tilde{c}_y, c_z]^T=\mathbf{R}_{\theta_{cs}}^{xy}[c_x,c_y,c_z]^T=[0,\sqrt{c_x^2+c_y^2}, c_z]^T=[0,D\sin\phi_{cs},D\cos\phi_{cs}]^T$. 
		Here, the $2^{nd}$ and $3^{rd}$ equalities follow from the facts that $\cos\theta_{cs}={c_y \over \sqrt{{c_x}^2+{c_y}^2}}$, $\sin\theta_{cs}={c_x \over \sqrt{{c_x}^2+{c_y}^2}}$, and $\sqrt{c_x^2+c_y^2}=D\sin\phi_{cs}$ (Fig. 3(a)). In addition, $c_z=D\cos\phi_{cs}$. This completes the proof.
	\end{proof}
\end{Lemma}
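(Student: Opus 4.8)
The plan is to establish the claim by direct computation, exploiting the fact that $\mathbf{c}=[c_x,c_y,c_z]^T$ is nothing but the position vector of the Rx center written in spherical coordinates with radius $D$, polar angle $\phi_{cs}$, and azimuth $\theta_{cs}$. First I would record the two elementary relations that the geometry in Fig.~3(a) forces on $\mathbf{c}$: since $\phi_{cs}$ is the polar angle measured from the $z$-axis and $\mathbf{c}$ has magnitude $D$, the $z$-component obeys $c_z=D\cos\phi_{cs}$, while the projection of $\mathbf{c}$ onto the $xy$-plane has length $\sqrt{c_x^2+c_y^2}=D\sin\phi_{cs}$.

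Next I would translate the definition of the azimuthal angle $\theta_{cs}$ into component form. Because $\theta_{cs}$ is measured from the $y$-axis rather than the $x$-axis, the in-plane projection decomposes as $c_y=\sqrt{c_x^2+c_y^2}\,\cos\theta_{cs}$ and $c_x=\sqrt{c_x^2+c_y^2}\,\sin\theta_{cs}$, equivalently $\cos\theta_{cs}=c_y/\sqrt{c_x^2+c_y^2}$ and $\sin\theta_{cs}=c_x/\sqrt{c_x^2+c_y^2}$.

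With these in hand, I would apply the rotation matrix $\mathbf{R}_{\theta_{cs}}^{xy}$ (of the same $z$-axis form fixed implicitly by (1)) to $\mathbf{c}$ and simplify entry by entry. The $\tilde{x}$-component equals $c_x\cos\theta_{cs}-c_y\sin\theta_{cs}$, which cancels to zero upon substituting the expressions above; the $\tilde{y}$-component equals $c_x\sin\theta_{cs}+c_y\cos\theta_{cs}=(c_x^2+c_y^2)/\sqrt{c_x^2+c_y^2}=\sqrt{c_x^2+c_y^2}$; and the $z$-component is left invariant by a rotation about the $z$-axis. Inserting the spherical relations from the first step then converts the result to $[0,\,D\sin\phi_{cs},\,D\cos\phi_{cs}]^T$, which is exactly the asserted location of the Rx center in the $(\tilde{x},\tilde{y},z)$ frame.

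The one point requiring genuine care—rather than routine algebra—is the orientation convention of the rotation. I would have to confirm that applying $\mathbf{R}_{\theta_{cs}}^{xy}$ itself, as opposed to its transpose $(\mathbf{R}_{\theta_{cs}}^{xy})^T$, is the operation that carries the in-plane projection onto the $\tilde{y}'$-axis; this is forced by matching the counterclockwise sign convention established in (1) with the stipulation that $\theta_{cs}$ is measured from the $y$-axis. Once that consistency is verified, the vanishing of the $\tilde{x}$-entry is precisely the confirmation that the projection lies along $\tilde{y}'$, and the remaining identifications are immediate.
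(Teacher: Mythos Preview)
Your proposal is correct and follows essentially the same approach as the paper's proof: both apply $\mathbf{R}_{\theta_{cs}}^{xy}$ to $\mathbf{c}$, invoke the azimuthal identities $\cos\theta_{cs}=c_y/\sqrt{c_x^2+c_y^2}$, $\sin\theta_{cs}=c_x/\sqrt{c_x^2+c_y^2}$ (which hold because $\theta_{cs}$ is measured from the $y$-axis), and then substitute the spherical relations $\sqrt{c_x^2+c_y^2}=D\sin\phi_{cs}$, $c_z=D\cos\phi_{cs}$. Your version is slightly more explicit in carrying out the entrywise matrix product and in flagging the orientation convention, but the argument is the same.
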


\subsection{Modeling concurrent misalignments}

Without loss of generality, we can model the Rx UCA with all three types of misalignments, rotation, tilting and center shift, as follows. The Rx UCA is rotated and tilted at the origin and then shifted so that it is centered at $(c_x,c_y,c_z)$. From (5), the coordinate of the $n^{th}$ Rx antenna after all three misalignments, denoted as $(a_x^n, a_y^n, a_z^n)$ can be written as
\begin{equation}
\begin{split}
[a_x^n, a_y^n, a_z^n]^T&=\mathbf{c}+\mathbf{R}_{\varphi_x}^{xz}\mathbf{R}_{\varphi_y}^{yz} \mathbf{R}_{\theta_o}^{xy}[R_r\cos\theta_n, R_r\sin\theta_n, 0]^T\\
&=\mathbf{c}+\mathbf{R}_{\varphi_x}^{xz}\mathbf{R}_{\varphi_y}^{yz}\\ &\quad \quad \times [R_r\cos(\theta_n+\theta_o), R_r\sin(\theta_n+\theta_o), 0]^T.
\end{split}
\end{equation}
In the new coordinate system $(\tilde{x},\tilde{y},z)$ considered in {Lemma 1}, (6) is rewritten as follows.

\begin{Lemma}
	In the new coordinate system $(\tilde{x},\tilde{y},z)$, the coordinate of the $n^{th}$ Rx antenna, denoted as $(a_{\tilde x}^n, a_{\tilde y}^n, a_{z}^n)$, is given by 
	\begin{equation}
	\begin{split}
	[a_{\tilde x}^n, a_{\tilde y}^n, a_{z}^n]^T&=\mathbf{R}_{\theta_{cs}}^{xy}[a_{x}^n, a_{y}^n, a_{z}^n]^T\\
	&=\begin{bmatrix}
	R_1\cos({\theta_n-\alpha_1}) \\
	D\sin{\phi_{cs}}+R_2\cos({\theta_n-\alpha_2}) \\
	D\cos{\phi_{cs}}+R_3\cos({\theta_n-\alpha_3}) \\
	\end{bmatrix},
	\end{split}
	\end{equation}
	where $R_i=R_r\sqrt{b_{i1}^2+b_{i2}^2}$, ${\alpha_i}=\tan^{-1}({b_{i2} \over b_{i1}})-\theta_o$ and $b_{ij}$ is the $(i,j)^{th}$ elements of $\mathbf{R}_{\theta_{cs}}^{xy}\mathbf{R}_{\varphi_x}^{xz}\mathbf{R}_{\varphi_y}^{yz}$. 
\end{Lemma}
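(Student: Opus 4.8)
The plan is to start directly from the concurrent-misalignment coordinate in (6) and apply the change-of-basis rotation $\mathbf{R}_{\theta_{cs}}^{xy}$ to both sides, then collapse each row into a single sinusoid. Left-multiplying (6) by $\mathbf{R}_{\theta_{cs}}^{xy}$ and using linearity gives
\[
\mathbf{R}_{\theta_{cs}}^{xy}[a_x^n, a_y^n, a_z^n]^T = \mathbf{R}_{\theta_{cs}}^{xy}\mathbf{c} + \mathbf{B}\,[R_r\cos(\theta_n+\theta_o),\, R_r\sin(\theta_n+\theta_o),\, 0]^T,
\]
where I abbreviate $\mathbf{B} = \mathbf{R}_{\theta_{cs}}^{xy}\mathbf{R}_{\varphi_x}^{xz}\mathbf{R}_{\varphi_y}^{yz}$, so that its entries $b_{ij}$ are exactly the quantities referenced in the statement.

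The first term is handled by Lemma 1, which gives $\mathbf{R}_{\theta_{cs}}^{xy}\mathbf{c} = [0,\, D\sin\phi_{cs},\, D\cos\phi_{cs}]^T$ and therefore supplies the constant offsets $0$, $D\sin\phi_{cs}$, and $D\cos\phi_{cs}$ appearing in the three rows of (7). For the second term, since the third entry of the rotated array vector is zero, only the first two columns of $\mathbf{B}$ contribute; hence its $i$th row reduces to $R_r[b_{i1}\cos(\theta_n+\theta_o) + b_{i2}\sin(\theta_n+\theta_o)]$.

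The remaining step is to rewrite each expression $b_{i1}\cos\psi + b_{i2}\sin\psi$ (with $\psi = \theta_n + \theta_o$) as a single cosine via the harmonic addition identity $a\cos\psi + b\sin\psi = \sqrt{a^2+b^2}\cos(\psi - \tan^{-1}(b/a))$. Taking $a = b_{i1}$ and $b = b_{i2}$ and restoring the factor $R_r$ yields $R_r\sqrt{b_{i1}^2+b_{i2}^2}\cos(\theta_n + \theta_o - \tan^{-1}(b_{i2}/b_{i1})) = R_i\cos(\theta_n - \alpha_i)$ once the definitions $R_i = R_r\sqrt{b_{i1}^2+b_{i2}^2}$ and $\alpha_i = \tan^{-1}(b_{i2}/b_{i1}) - \theta_o$ are substituted. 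Adding the constant offsets obtained from Lemma 1 then reproduces (7) row by row.

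I do not expect a genuine obstacle, as the argument is a direct computation. The only points demanding care are bookkeeping ones: verifying that the zero third component of the array vector eliminates the third column of $\mathbf{B}$, so that the amplitudes $R_i$ depend solely on $b_{i1}$ and $b_{i2}$, and tracking the phase convention so that the $-\theta_o$ shift absorbed into $\psi$ lands inside $\alpha_i$ with exactly the sign stated, rather than being folded into the amplitude or appearing with an opposite sign.
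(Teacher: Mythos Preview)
Your proposal is correct and is precisely the ``direct calculation'' the paper invokes: left-multiply (6) by $\mathbf{R}_{\theta_{cs}}^{xy}$, use Lemma~1 for the $\mathbf{c}$ term, and collapse each row $R_r[b_{i1}\cos(\theta_n+\theta_o)+b_{i2}\sin(\theta_n+\theta_o)]$ via harmonic addition. The paper states only that (7) follows by direct computation, so your write-up is in fact more detailed than the original while following the identical route.
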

The result in (7) can be obtained by directly calculating $\mathbf{R}_{\theta_{cs}}^{xy}[a_{x}^n, a_{y}^n, a_{z}^n]^T$.
The coordinate of the $m^{th}$ Tx antenna in the new coordinate system is given by $(R_t\cos\theta_m',R_t\sin\theta_m',0)$ where $\theta_m'=\theta_m+\theta_{cs}$. Now the distance between the $m^{th}$ Tx antenna and the $n^{th}$ Rx antenna, $d(n,m)$, can be represented as follows.

\begin{Lemma} The distance $d(n,m)$ is written as 
	\begin{equation}
	d(n,m)= D\big\{1+f(D,R_t,R_r,\theta_m,{\theta_n},{\theta_o},{\theta_{cs}},\varphi_x,\varphi_y,\phi_{cs})\big\}^{1 \over 2},
	\end{equation} 
	where $f(D,R_t,R_r,\theta_m,\theta_n,\theta_o,\theta_{cs},\varphi_x,\varphi_y,\phi_{cs})$ 
	is presented in Appendix A. \end{Lemma}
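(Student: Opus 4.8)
The plan is to obtain $d(n,m)$ by a direct Euclidean-distance computation in the rotated frame $(\tilde x,\tilde y,z)$, followed by factoring $D^2$ out of the squared distance. Because $\mathbf{R}_{\theta_{cs}}^{xy}$ is an orthogonal (rotation) matrix, it is an isometry, so I may evaluate $d(n,m)$ from the coordinates expressed in the new system without loss of generality. In that frame the $n^{th}$ Rx antenna is located at $(a_{\tilde x}^n,a_{\tilde y}^n,a_z^n)$ given by (7), while the $m^{th}$ Tx antenna is at $(R_t\cos\theta_m',R_t\sin\theta_m',0)$ with $\theta_m'=\theta_m+\theta_{cs}$. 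Hence I would start from
\[
d^2(n,m)=(a_{\tilde x}^n-R_t\cos\theta_m')^2+(a_{\tilde y}^n-R_t\sin\theta_m')^2+(a_z^n)^2 .
\]

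Next I would substitute the expressions for $a_{\tilde x}^n$, $a_{\tilde y}^n$, $a_z^n$ and expand the three squares. Two Pythagorean cancellations carry the whole argument. First, the shift contributes $(D\sin\phi_{cs})^2+(D\cos\phi_{cs})^2=D^2$, which is precisely the intended leading term; this is the payoff of Lemma 1, since placing the center on the $\tilde y'$-axis makes the shift vector $(0,D\sin\phi_{cs},D\cos\phi_{cs})$, whose squared norm collapses cleanly to $D^2$. Second, the transmit contributions combine as $R_t^2\cos^2\theta_m'+R_t^2\sin^2\theta_m'=R_t^2$. After these, the squared distance has the form $d^2(n,m)=D^2+R_t^2+(\text{remaining terms})$.

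The remaining terms are the self-terms $\sum_{i=1}^{3} R_i^2\cos^2(\theta_n-\alpha_i)$, the shift-to-radius cross terms $2D\sin\phi_{cs}\,R_2\cos(\theta_n-\alpha_2)$ and $2D\cos\phi_{cs}\,R_3\cos(\theta_n-\alpha_3)$, and the transmit cross terms $-2R_tR_1\cos\theta_m'\cos(\theta_n-\alpha_1)$, $-2R_tD\sin\theta_m'\sin\phi_{cs}$, and $-2R_tR_2\sin\theta_m'\cos(\theta_n-\alpha_2)$. Factoring $D^2$ then yields $d(n,m)=D\{1+f\}^{1/2}$, where $f$ is the sum of all remaining terms divided by $D^2$; writing out and regrouping these terms as a function of $(D,R_t,R_r,\theta_m,\theta_n,\theta_o,\theta_{cs},\varphi_x,\varphi_y,\phi_{cs})$ through the definitions $R_i=R_r\sqrt{b_{i1}^2+b_{i2}^2}$ and $\alpha_i=\tan^{-1}(b_{i2}/b_{i1})-\theta_o$ is exactly the content of Appendix A.

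The only real difficulty is bookkeeping: tracking the ten arguments and getting the signs of the six cross terms right, since the conceptual step is just the one-line Pythagorean observation above. As a sanity check I would verify the fully aligned, unshifted special case $\varphi_x=\varphi_y=0$, $\theta_{cs}=0$, $\phi_{cs}=0$, $D=D_A$, for which $R_1=R_2=R_r$, $R_3=0$, $\alpha_1=-\theta_o$, and $\alpha_2=\pi/2-\theta_o$; substituting these should reduce (8) to the earlier expression $\tilde d_A(n,m)=\{D_A^2+R_t^2+R_r^2-2R_tR_r\cos(\theta_n-\theta_m+\theta_o)\}^{1/2}$ in (2), confirming that the collected terms are correct.
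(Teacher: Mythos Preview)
Your approach is correct and matches the paper's: Appendix~A computes $d(n,m)$ exactly as you outline, starting from the Euclidean distance in the $(\tilde x,\tilde y,z)$ frame, expanding the three squares, collapsing $(D\sin\phi_{cs})^2+(D\cos\phi_{cs})^2=D^2$ and $R_t^2\cos^2\theta_m'+R_t^2\sin^2\theta_m'=R_t^2$, and factoring out $D^2$. The only substantive steps the paper adds beyond your sketch are the orthogonality identity $\sum_{i=1}^{3}(b_{i1}^2+b_{i2}^2)=2$ (which turns your self-terms $\sum_i R_i^2\cos^2(\theta_n-\alpha_i)$ into $R_r^2$ plus double-angle terms) and an explicit simplification of the Tx--Rx cross term, their $\{1\}$, via the entries of $\mathbf{R}_{\theta_{cs}}^{xy}\mathbf{R}_{\varphi_x}^{xz}\mathbf{R}_{\varphi_y}^{yz}$ to isolate the factor $-2R_tR_r\cos(\theta_n-\theta_m+\theta_o)$ that drives the subsequent decomposition (9)--(12).
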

\begin{proof}
	See proof in Appendix A.
\end{proof}
Because $D \gg R_t$ and $R_r$, $d(n,m)$ can be approximated as 
\begin {equation}
\begin{split}
	d(n,m) =&D\Big\{ 1+{1\over {2}}f(D,R_t,R_r,\theta_m,{\theta_n},{\theta_o},{\theta_{cs}},\varphi_x,\varphi_y,\phi_{cs})\Big\}\\
	=&d_A(n,m)-\tau_{t}(m)+\tau_{r}(n),
	\end {split}
	\end {equation} 
	where 
	\begin{equation}
	d_A(n,m)={D-{R_tR_r \over D}\cos({\theta_n}-{\theta_m}+{\theta_o})},
	\end{equation}
	\begin{equation}
	\tau_{t}(m)=R_t\sin{(\theta_m+\theta_{cs})}\sin{\phi_{cs}},
	\end{equation}
	and 
	\begin{equation}
	\begin{split}
	\tau_{r}(n)&={R_r^2\over 4D}\{\cos2({\theta_n-\alpha_1})+\cos2({\theta_n-\alpha_2})
	+\cos2({\theta_n-\alpha_3})\}\\
	&\quad +R_2\cos({\theta_n-\alpha_2})\sin\phi_{cs}+R_3\cos({\theta_n-\alpha_3})\cos\phi_{cs}.
	\end{split}
	\end{equation}
	Note that in (9), $d(n,m)$ is decomposed into three components defined in (10)\textendash(12). Comparing (10) with (3), we can see that $d_A(n, m)$ can be thought of as the distance between the Tx and Rx antennas of an aligned UCA system that only has rotation, such as the one shown in Fig. 1. In (11), $\tau_t(m)$  represents the displacement caused by the center-shift only, while the displacement $\tau_r(n)$ in (12) is caused by all three types of misalignments. The subscripts $t$ and $r$ of $\tau_t(m)$ and $\tau_r(n)$ indicate that $m$ and $n$ are the indices of Tx and Rx antennas, respectively. In what follows, we shall see that the expression for $d(m, n)$ in (9) leads to an efficient representation for the misaligned channel matrix.
	
	Consider the misaligned channel matrix $\mathbf{H} \in \mathbb{C}^{N_s \times N_s}$ whose $(m, n)^{th}$ entry is given by
	\begin {equation}
	\begin {split}
	{h}(n,m) &= e^{-j{2\pi \over \lambda} d(n,m)}\\
	&=h_A(n,m)\cdot T_{t}^*(m)\cdot T_{r}(n)
	\end {split}
	\end {equation}
	which is the normalized free-space channel response, where  $\lambda$ is the wavelength of the carrier [2]\textendash[4] and the 2nd equality follows from (9); $h_A(n,m)=e^{-j{2\pi \over \lambda}D}\cdot e^{+j{\beta}\cos({\theta_n}-{\theta_m}+{\theta_o})}$, $T_{t}(m)=e^{-j{2\pi \over \lambda}\tau_{t}(m)}$, and $T_{r}(n)=e^{-j{2\pi \over \lambda}\tau_{r}(n)}$. Here, $\beta\triangleq{2\pi R_tR_r \over \lambda D}$. The parameter $\beta$ is referred to as the RPDR.
	In matrix-form, (13) is rewritten as
	\begin {equation}
	\begin {split}
	\mathbf{H}&= \mathbf{T}_{r}\mathbf{H}_A\mathbf{T}_{t}^H,
	\end {split}
	\end {equation}
	where $\mathbf{T}_{t}=diag[T_t(1), \cdots ,T_t(N_s)]$ and $\mathbf{T}_{r}=diag[T_r(1), \cdots ,T_r(N_s)]$; the $(n, m)^{th}$ entry of the matrix $\mathbf{H}_A$ is given by $h_A(n, m)$. 
	Because the matrix $\mathbf{H}_A$ is a circulant matrix, (14) can be further decomposed into 
	\begin {equation}
	\begin {split}
	\mathbf{H} = {\mathbf{T}_{r}}\mathbf{Q}{\mathbf{\Delta}_A}{\mathbf{Q}^H}\mathbf{T}_{t}^H,
	\end {split}
	\end {equation}
	where $\mathbf{Q}$ is the DFT matrix, $\mathbf{H}_A=\mathbf{Q}{\mathbf{\Delta}_A}{\mathbf{Q}^H}$, and $\mathbf{\Delta}_A \in \mathbb{C}^{N_s \times N_s}$ is a diagonal matrix. 
	Let $\mathbf{\Delta}_A=\mathbf{S}|\mathbf{\Delta}_A|$, where $\mathbf{S} \in \mathbb{C}^{N_s \times N_s}$ is a diagonal matrix of complex numbers having unit magnitude. Then, the decomposition of the channel matrix $\mathbf{H}$ in (15) leads to the following property.  
	\begin{Property} 
		Let $\mathbf{H}= \mathbf{U} \mathbf{\Sigma} \mathbf{V}^H$ represent the SVD of $\mathbf{H}$, where $\mathbf{U}\in \mathbb{C}^{N_s \times N_s}$ and $\mathbf{V}\in \mathbb{C}^{N_s \times N_s}$ are unitary matrices, and $\mathbf{\Sigma}\in \mathbb{R}^{N_s \times N_s}$ is a diagonal matrix with singular values on its diagonal. 
		Then, the SVD of $\mathbf{H}$ can be written as
		\begin{equation}
		\begin{split}
		\mathbf{U} &= {\mathbf{T}_{r}}\mathbf{QS},\\
		\mathbf{\Sigma} &= |\mathbf{\Delta}_A|,\\
		\mathbf{V} &=\mathbf{T}_{t} {\mathbf{Q}}. 
		\end{split}
		\end{equation}
		\begin{proof} 
			Because $\mathbf{T}_t$, $\mathbf{T}_r$, and $\mathbf{S} $ are diagonal matrices with unit gain $(|\mathbf{T}_t| = |\mathbf{T}_r| = |\mathbf{S}| =\mathbf{I}_{N_s})$, $\mathbf{T}_r\mathbf{Q}\mathbf{S}$ and $\mathbf{T}_t\mathbf{Q}$ are unitary matrices that can serve as the left and right singular matrices of the SVD, and the diagonal entries of $\mathbf{\Sigma}$ are the singular values. Here the singular values are not sorted in order.
		\end{proof}
	\end{Property} 
	
	Because of Property 1, the singular values of $\mathbf{H}$ are identical to those of $\mathbf{H}_A$; thus, the capacities of $\mathbf{H}$ and $\mathbf{H}_A$ are the same. Consequently, the singular values and the capacity of the misaligned channel $\mathbf{H}$  are \textit{independent of} the tilting and center-shift angles $\{{\theta_{cs}, \phi_{cs}, \varphi_x, \varphi_y}\}$. Next we analyze the singular values and obtain the optimal radii of the UCAs maximizing the capacity.

\section{Capacity and Optimal Design}

\subsection{Singular Value Analysis}

Because a singular value of $\mathbf{H}$, denoted as $\sigma_{k}$ for $k \in \{1,2,\cdots, N_s\}$, is equal to that of $\mathbf{H}_A=\mathbf{Q}\mathbf{\Delta}_A\mathbf{Q}^H$, $\sigma_{k}$ can be represented as 
\begin{equation}
\begin{split}
\sigma_{k}&=\Big|\mathbf{Q}(:,k)^H\mathbf{H}_A\mathbf{Q}(:,k)\Big|\\
&=\Big|\sum_{i=0}^{N_s-1}e^{-j\{{2\pi \over N_s}i(k-1)-\beta\cos{({2\pi \over N_s}i+\theta_o)}\}}\Big|,
\end{split}
\end{equation}
and $\sum_{k=1}^{N_s}\sigma_{k}^2=\mathrm{tr}(\mathbf{H}_A\mathbf{H}_A^H)=N_s^2$.  The singular value is a function of $N_s$, $\theta_o$ and $\beta$. In our analysis, we assume that $N_s$ is given and, whenever necessary, the singular values are denoted as  $\sigma_{k}(\beta, \theta_o)$. The singular values $\sigma_{k}(\beta, \theta_o)$ exhibit the following characteristics.
\begin{Property} Suppose that $N_s$ is an even number. 
	\begin{enumerate}[label=(\alph*)] 
		\item  $\sigma_{k}(\beta, \theta_o)=\sigma_{{N_s+2-k}}(\beta, \theta_o)$ for $\{ k |2 \leq k \leq N_s\}$. 
		\item $\sigma_{1}(\beta, \theta_o)\geq \sigma_{k}(\beta, \theta_o)$ for $\{ k |2 \leq k \leq N_s\}$, if $0 \leq \beta \leq {\pi N_s \over 4\sum_{i=0}^{N_s-1} |\cos{({2\pi  \over N_s}i+\theta_o)}|}$.
		\item $\lim_{\beta\to0} \sigma_{1}(\beta, \theta_o)=N_s$ and $\lim_{\beta\to0} \sigma_{k}(\beta, \theta_o)=0$ for $\{ k |2 \leq k \leq N_s\}$.
		\item $\sigma_{k}(\beta, \theta_o)=\sigma_{k}(\beta, -\theta_o)$.
		\item $\sigma_{k}(\beta, \theta_o)=0$  if $k=N_s/2+1$ and  $\theta_o=\pm\pi/N_s$.
	\end{enumerate}
\end{Property}

The proofs for Properties 2\textit{(a)\textendash(e)} are presented in Appendix B. Property 2(\textit{a}) is an extension of the property in [10] showing that $\sigma_k(\beta) = \sigma_{N_s+2-k}(\beta)$ when $\theta_o = 0$.  Because of Property 2(\textit{a}), there are $N_s/2 + 1$ distinct singular values for given $N_s$ and $\beta$. Properties 2(\textit{b}) and 2(\textit{c}) indicate that $\sigma_1$ becomes dominant as the RPDR, $\beta$, decreases. When the radii product $R_t R_r$ is fixed, $\beta$ decreases as the communication range $D$ increases. Therefore, if a UCA system with small antennas is deployed for long-distance communication, then only one data stream can be transmitted without multiplexing. Property 2(\textit{d}) indicates that the effects of clockwise and counterclockwise rotations on the singular values are the same. Finally, Property 2(\textit{e}) shows that $\sigma_{N_s/2+1}(\beta, \theta_o)$ for the $(N_s/2+1)^{th}$ eigen-mode becomes zero when $|\theta_o|$ hits its maximum value $\pi/N_s$.
\begin{figure}
	\centering
	\subfloat[]{\includegraphics[width=0.25\textwidth, height=0.16\textwidth]{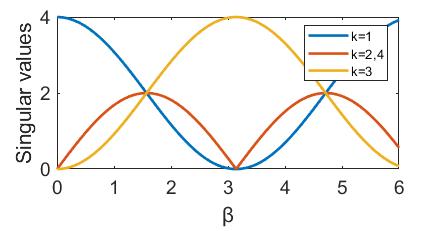}}
	\subfloat[]{\includegraphics[width=0.25\textwidth, height=0.16\textwidth]{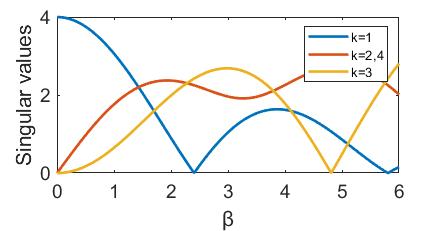}}\\
	\subfloat[]{\includegraphics[width=0.25\textwidth, height=0.16\textwidth]{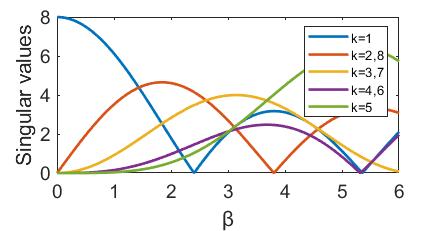}}
	\subfloat[]{\includegraphics[width=0.25\textwidth, height=0.16\textwidth]{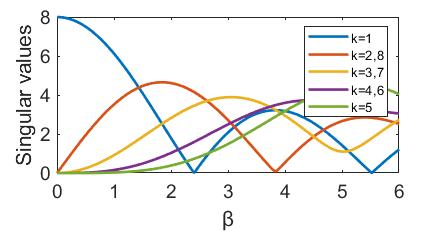}}
	\caption{Singular values against $\beta$. (a) $N_s=4$ and $\theta_o=0$. (b) $N_s=4$ and $\theta_o=\pi/(2{N_s})$. (c) $N_s=8$ and $\theta_o=0$. (d) $N_s=8$ and $\theta_o=\pi/(2{N_s})$.}
\end{figure}

\begin{figure}
	\centering
	\subfloat[]{\includegraphics[width=0.25\textwidth, height=0.16\textwidth]{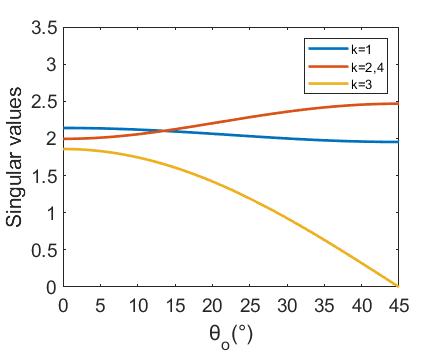}}
	\subfloat[]{\includegraphics[width=0.25\textwidth, height=0.16\textwidth]{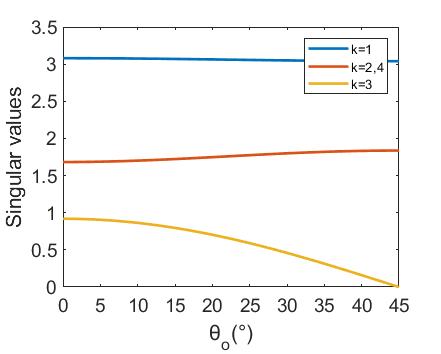}}\\
	\subfloat[]{\includegraphics[width=0.25\textwidth, height=0.16\textwidth]{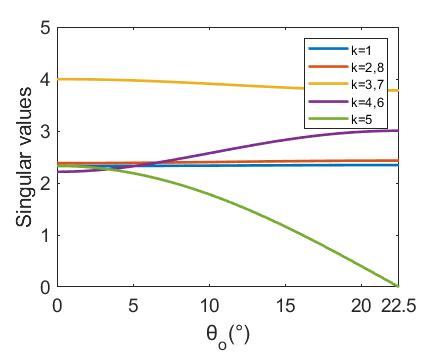}}
	\subfloat[]{\includegraphics[width=0.25\textwidth, height=0.16\textwidth]{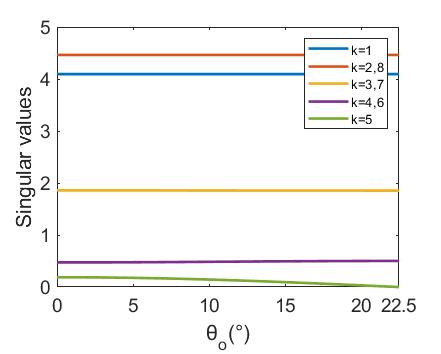}}
	\caption{Singular values against $\theta_o$. (a) $N_s=4$ and $\beta=1.5$. (b) $N_s=4$ and $\beta=1$. (c) $N_s=8$ and $\beta=3.1$. (d) $N_s=8$ and $\beta=1.5$.}
\end{figure}

Figs. 4 and 5 show the singular value curves against $\beta$ and $\theta_o$, respectively. These curves confirm Properties 2(\textit{a})\textendash(\textit{e}). The singular values fluctuate as $\beta$ varies, but they tend to vary slowly for $\theta_o$. In the following subsection, we shall see that the system capacity also fluctuates with $\beta$, and we will find the optimal value of $\beta$ that maximizes the capacity.

\subsection{Optimal Radii of UCAs}
In [13], a one-dimensional search process for obtaining the optimal radii of an aligned UCA system, maximizing the spectral efficiency when $N_s$, $\lambda$, and $D$ are given, was  proposed. This method searches for the optimal value of the product of $R_t$ and $R_r$, under the assumption that equal power allocation is adopted. In this subsection, we present an alternative approach to determining the optimal radii. Throughout this subsection, it is assumed that $\theta_o$ is fixed. The proposed method assumes the water-filling power allocation and maximizes the capacity; however, it is simpler to implement than the method introduced in [13].

\begin{figure}[t]
	\centering
	\centerline{\includegraphics[width=0.43\textwidth, height=0.26\textwidth]{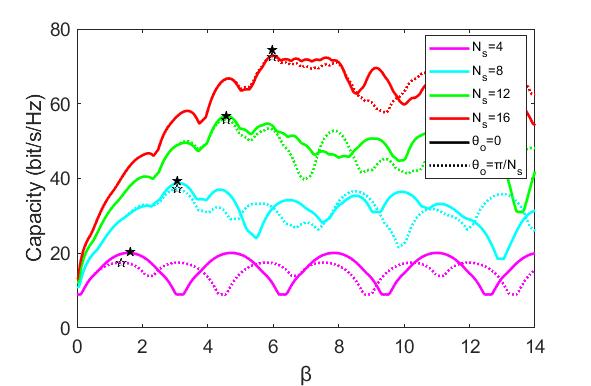}}
	\caption{Capacity curves ${\mathcal{C}}_{N_s}$ against $\beta$ when $N_s \in \{4,8, 12, 16\}$ and $P_T/N_o=15$dB. The maximum values ${\mathcal{C}}_{N_s}$ are marked by  $\bigstar$ for $\theta_o=0$ and $\largestar$ for $\theta_o=\pi/Ns$.}
\end{figure}

The capacity of a UCA system, denoted as $\mathcal{C}_{N_s}$, with channel $\mathbf{H}$ corrupted by additive white Gaussian noise (AWGN), is given by 
\begin{equation}
{\mathcal{C}}_{N_s}=\sum_{k=1}^{N_s}\log_2(1+{ p_{k}\sigma_{k}^2 \over {N_o}})
\end{equation}
where $\{p_{k}\}$ denotes the water-filling power allocations [20]. The proposed scheme is based on the observation that both $\sigma_{k}$ and $p_{k}$ in (18) are functions of $N_s$ and $\beta$. This observation holds true because the optimal power allocation $\{p_{k}\}$ are functions of $\{\sigma_{k}\}$, which in turn are functions of $N_s$ and $\beta$, as shown in (17). Based on this observation, we can search the an optimal RPDR value $\beta^o$ that maximizes the capacity in (18) when $N_s$ and the signal-to-noise ratio (SNR), $P_T/ N_o$ where $P_T = \sum_{k=1}^{N_s} p_{k}$, are given. The one-dimensional search for finding $\beta^o$ is straightforward. The search range starts from zero and is reasonably narrow, because $D \gg R_t$ and $R_r$. In the proposed design, we pre-determine the optimal RPDR $\beta^o$ values for all $\{N_s, {P_T\over N_o}\}$ values of interest during the initial stage and use them to obtain the optimal radii product $R_tR_r$ when $\lambda$ and $D$ are given. This two-step process is simpler to implement than this proposed in [13], which  directly searches for the optimal radii product $R_tR_r$. 
\begin{table}[tp]\label{aggiungi }\centering
	\caption{Optimal RPDR values ($\beta^o$) for $N_s\in\{4,8,12,16\}$ {and $\theta_o=0$}.}
	\begin{tabular}{ c||c|c|c|c }
		\hline                   
		\backslashbox{SNR (dB)}{$N_s$} &4&8 &12&16 \\
		\hline
		\hline
		$5$& 1.57&3.10&4.53&5.98\\
		
		$10$& 1.51&3.08&4.56&5.97\\
		
		$15$& 1.54&3.09&4.57&5.98\\
		
		$20$& 1.54&3.08&4.55&5.98\\
		\hline
	\end{tabular}\\
\end{table}

Fig. 6 shows the capacity curves ${\mathcal{C}}_{N_s}$ against $\beta$, for $0 < \beta \leq 14$, $N_s \in \{4, 8, 12, 16\}$, $\theta_o \in \{0, \pi/N_s\}$, and $P_T/N_o = 15$ dB. When $\beta$ approaches zero, because of Property 2(\textit{c}), the capacity of a UCA system in (18) becomes $\log_2(1+{ P_TN_s^2 \over {N_o}})\approx\log_2($SNR$)+2\log_2{N_s}$. In general, the optimal RPDR, $\beta^o$, is not unique and we choose the smallest $\beta^o$ from the set of the optimal RPDR values, to minimize the antenna size. Note that the optimal RPDR values maximizing the capacity for $\theta_o = 0$ and $\pi/N_s$ are almost identical; they are robust to $\theta_o$, and we list only the optimal RPDR values for $\theta_o =0$ in Table I, which shows $\beta^o$ for $P_T/N_o \in$ {$\{5$ dB, $10$ dB, $15$ dB, $20$ dB$\}$}. In Table I, the optimal RPDR values  for a fixed $N_s$ are almost identical irrespective of the SNR. This happens because the singular values are independent of the SNR, and equal power allocation is almost optimal under a high-SNR regime. Furthermore, as seen in Table I, $\beta^o$ tends to increase linearly with $N_s$. Therefore, when designing a UCA system for a given $D$, it is necessary to increase the product $R_tR_r$ in proportion to $N_s$.

To illustrate UCA systems designed by the proposed method, we design UCA systems with the following parameters: $\lambda = 0.004$ meters (75 GHz), $D = 100$ meters, $R_t = R_r$, SNR = $15$ dB, and $N_s \in \{4, 8, 12, 16\}$. The results are shown in Table II. The optimal radius increases with $N_s$ and the capacity gain achieved by increasing both $N_s$ and the radius can be significant.\footnote{In fact, the capacity tends to increase linearly with $N_s$. This can be seen from the following capacity upper bound, which is valid under a high-SNR regime [20]: ${\mathcal{C}}_{N_s} \leq N_s \log{({P_T \over N_s^2N_o}\sum_{k=1}^{N_s}\sigma_{k}^2)}= N_s \log{P_T \over N_o}$, where the equality holds due to (17). 
}
\begin{table}[tp]\centering
	\caption{Optimal radius ($R_t = R_r$) and the corresponding capacity when $N_s\in\{4,8,12,16\}$, $\lambda = 0.004$ meters (75GHz),  $D = 100$ meters, and SNR = $15\mathrm{dB}.$}
	\begin{tabular}{ c||c|c|c|c }
		\hline                      
		{$N_s$} &4&8 &12&16 \\
		\hline
		\hline
		{$R_t(R_r)$} (meters) & 0.31 & 0.44 & 0.54 &  0.62\\\hline
		{${\mathcal{C}}_{N_s}$} (bit/s/Hz)& 20.11 &38.79&56.79&72.88\\
		\hline
	\end{tabular}\\
\end{table}
\subsection{Optimal Transceiver of the Optimal UCA System}

Table \rom{3} shows the condition numbers ($\max \sigma_k/\min \sigma_k$) when the RPDR value is optimal and $\theta_o = 0$. 
When $N_s > 4$, although the singular values are not identical when the RPDR value satisfies the optimal value, the deviation among singular values is reasonably small if the number of antennas is small, e.g., $N_s = 8$, indicating that the column vectors of the channel $\mathbf{H}$  are nearly orthogonal to each other. Therfore, the ZF receiver can  achieve almost  maximum channel capacity without precoding at the optimal criteria. When the number of antennas is large, the channel capacity is achieved by the ZF receiver in conjunction with the successive interference cancellation (SIC) of data streams.

\begin{table}[tp]\centering
	\caption{Condition number of $\sigma_k$ when $N_s\in\{4,8,12,16\}$ and $\theta_o=0$}
	\begin{tabular}{ c|c||c|c|c|c }
		\hline                      
		\multicolumn{2}{c||}{$N_s$} &4&8 &12&16 \\
		\hline
		\hline
		Condition number   &$\beta=\beta^o$& 1 & 1.84 & 2.42 & 3.51\\ 
		($\max \sigma_k \backslash \min \sigma_k$)&$\beta=0.5\beta^o$& 6.36 & 22.63 & 104.53 & 469.97\\\hline
	\end{tabular}\\
\end{table}

\section{Precoder Design for Non-optimal UCA System}

Although the optimal design criterion allows us to achieve the maximum capacity, UCA systems that consider longer transmission distances with a limited array size are also preferred from practical implementation perspectives. If the communication distance is longer than the optimal value (or the radii are smaller than the optimal value), the deviation between singular values is considerable as shown in the case of $\beta = 0.5\beta^o$  in Table \rom{3}. Hence precoding is necessary for achieving high data rates. 
\subsection{Codebook Based Precoder}
Referring to Property 1, the SVD-based capacity achieving the optimal  precoding scheme of a UCA system is $\mathbf{V}=\mathbf{T}_t\mathbf{Q}$ with the water-filling power allocation where $\mathbf{T}_t$ is determined by the center-shift angles $\{\theta_{cs}, \phi_{cs}\}$. To implement this optimal precoder,  the information of $\{\theta_{cs}, \phi_{cs}\}$ and  $\theta_o$ is required for precoding and power allocation, respectively; otherwise, full channel-state information is necessary. Because the center-shift angles can be viewed as the directions of arrival (DoAs) when the center of the Tx UCA is regarded as a source point, existing estimation techniques, such as the multiple-signal classification (MUSIC) algorithm can be used to obtain those angles. However, tilting of the Rx UCA induces phase offsets of the center-shift angles (or DoAs), which degrade estimation accuracy (we demonstrate this through computer simulations in the next section).

An alternative approach is to construct a codebook with angular quantizations. We quantize the center-shift angles so that their sine values are uniformly distributed within ranges determined by the angular ranges of the center-shift angles. This is because the phase term of $\mathbf{T}_t$ in (11) is proportional to  the sines of $\theta_{cs}$ and $\phi_{cs}$. Because $\phi_{cs}$ represents the degree of shift, an angular range of $\phi_{cs}$ can be narrow in wireless backhaul scenarios in which deviations might be small. For example, when $D=100$ meters and the Rx UCA moves $10$ meters from the $z-$ axis and then $\phi_{cs}=0.1$ rad $(5.73^\circ$),  the angular range of $\phi_{cs}$  can be set as $[-0.175$ rad $(-10^\circ),  0.175$ rad $(10^\circ)]$. However, $\theta_{cs}$ represents the direction of the center-shift, which ranges from $-\pi$ to $\pi$. Thus, its angular range should be $[-\pi/2, \pi/2]$ because we quantize $\sin{\theta_{cs}}$, and $\sin{(\pi-\theta_{cs})}=\sin{\theta_{cs}}$. 

Let the center-shift angles $\{\theta_{cs}, \phi_{cs}\}$ be quantized to $2^{L_1}$ and $2^{L_2}$ elements, respectively. Then we have $2^{L}$ sets of angles, where $L=L_1 + L_2$. We denote the set of angles as $\overline{\mathcal{A}}_{(l)}=\{{\overline{\theta}_{cs}}_{(l_1)}, {\overline{\phi}_{cs}}_{(l_2)}\}$, where $l=1,\cdots, 2^{L}$ and ${\overline{\theta}_{cs}}_{(l_1)}$ (${\overline{\phi}_{cs}}_{(l_2)}$) is the $l_1 ^{th}$ ($l_2 ^{th}$) quantized angle of $\theta_{cs}$ ($\phi_{cs}$) among $2^{L_1}$ ($2^{L_2}$) elements. The optimal set of angles is determined by a selection algorithm operating at Rx; and the Rx transfers  $L$ bits index of the optimal set to the Tx. Let ${\overline{\mathbf{T}}_{t}}_{(l)}$ denote ${\mathbf{T}}_{t}$ obtained by the quantized angles from the set $\overline{\mathcal{A}}_{(l)}$. We consider the maximum achievable rate as a performance metric, and then the receiver can search for the codebook index that solves
\begin{equation}
l_{opt}=\underset{0\leq l\leq 2^L}\argmax \quad \log_2\det(\mathbf{I}_{N_s}+\mathbf{H}\overline{\mathbf{F}}_{(l)}\mathbf{P}{\overline{\mathbf{F}}_{(l)}^H}\mathbf{H}^H).
\end{equation}
Here, $\mathbf{P}$ is a diagonal matrix whose $k^{th}$ diagonal entry is given by $p_k/N_o$  and $\overline{\mathbf{F}}_{(l)}\triangleq {\overline{\mathbf{T}}_{t}}_{(l)}\mathbf{Q}$. The water-filling power allocations $\{p_k\}$ can be approximated by the following  power allocation policy.


{\subsection{Approximate Power Allocation}
 Because of the robustness of singular values against $\theta_o$, the power allocations of the misaligned UCA system can be substituted with the power allocations designed for the aligned UCA system where $\theta_o=0$. This roughly performed power allocation only requires the information of $D$ at given $R_t$, $R_r$, $\lambda$, and $N_s$, without cumbersome estimation of $\theta_o$, but only minor performance degradation appears, which will be demonstrated via computer simulations in Section \rom{5}.}

\begin{figure}
	\centering
	\subfloat[]{\includegraphics[width=0.21\textwidth, height=0.07\textwidth]{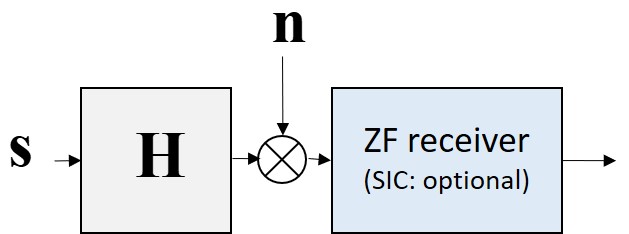}}\\
	\subfloat[]{\includegraphics[width=0.38\textwidth, height=0.098\textwidth]{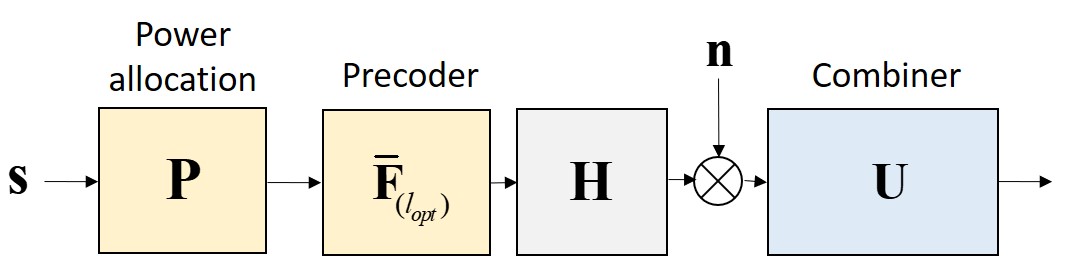}}
	\caption{ Proposed transceiver architectures for misaligned UCA systems. (a) At the optimal criterion. (b) When the optimal criterion is not satisfied.}
\end{figure}
\vspace{5mm}
The proposed transceiver architectures for the optimal and  non-optimal UCA systems to achieve the channel capacity without estimating or compensating misalignment angles are illustrated in Fig. 7. The optimally designed UCA systems and the non-optimal UCA systems are complementary to each others. The optimal UCA systems achieve the maximum capacity by the simple ZF receiver  at the expense of large UCA radii, whereas the non-optimal UCA systems can fit in a small space but require precoding and suffer from capacity reduction. 
Two natural questions from an implementation point of view are how far the ZF receiver can support the capacity and when precoding is worthwhile at the fixed UCA radii.
To answer these questions, we investigate the performance of the two transceivers through computer simulations.

\section{Simulation Results}

The performance of the proposed transceivers of a misaligned UCA system is examined through computer simulations with the following parameters: the carrier frequency is $75$ GHz ($\lambda=2mm$), SNR=15 dB, and the radii of the Tx/Rx UCAs are assumed to be equal. i.e., $R_t=R_r$, and fixed to the optimal value at the communication distance of $100$ meters. All results are obtained by averaging over 100 channel realizations, in which  misalignment angles are randomly generated to have a uniform distribution with a zero mean. We assume that the misalignment angles   $\{\varphi_{x},\varphi_{y}, \phi_{cs}, \theta_o\}$ are within the interval [$-{10\over 180} \pi$, ${10\over 180} \pi$], except for $\theta_{cs}$ which is in the interval [$-\pi$, $\pi$]. 

This section consists of two parts. In the first part, we examine the performance of  proposed codebook-based precoder and approximated water-filling power allocation policy. Then, in the second part, we compare the performance of two transceiver systems in Fig. 7 to provide guidelines for transceiver implementation.

{\subsection{Performance Evaluation of Proposed Precoding scheme}
	The performance of the proposed codebook-based precoder is compared with that of its benchmarks: the optimal precoder, the precoder obtained by the MUSIC algorithm, and the identity precoder. The metric for the performance comparison is the maximum achievable rate. For the MUSIC estimation, we assumed that an antenna on the Tx UCA acts as a source point so that the source signal transmitted from the $m^{th}$ Tx antenna is conveyed through the channel corresponding to the $m^{th}$ column vector of $\mathbf{H}$. Then, the Rx computes the MUSIC algorithm using the array response vector of the UCA [21]. Since the  difference between the $m^{th}$ column vector of $\mathbf{H}$ and the array response vector of the UCA mainly comes from tilting, this setting reflects the effects of tilting on the MUSIC estimation.  
	\begin{figure}
		\centering
		\subfloat[]{\includegraphics[width=0.26\textwidth, height=0.22\textwidth]{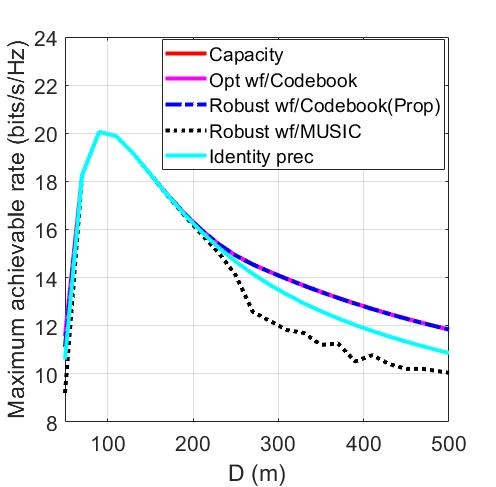}}
		\subfloat[]{\includegraphics[width=0.26\textwidth,height=0.22\textwidth]{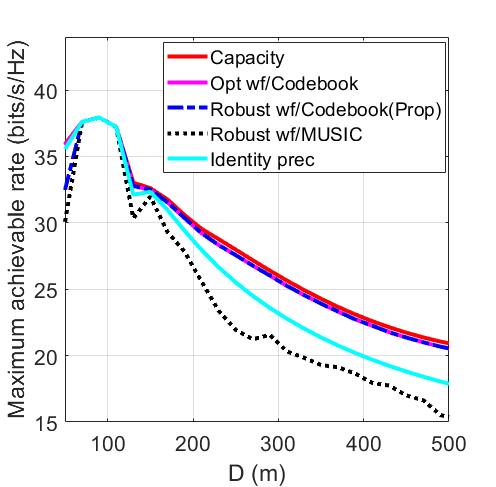}}\\
		\subfloat[]{\includegraphics[width=0.26\textwidth, height=0.22\textwidth]{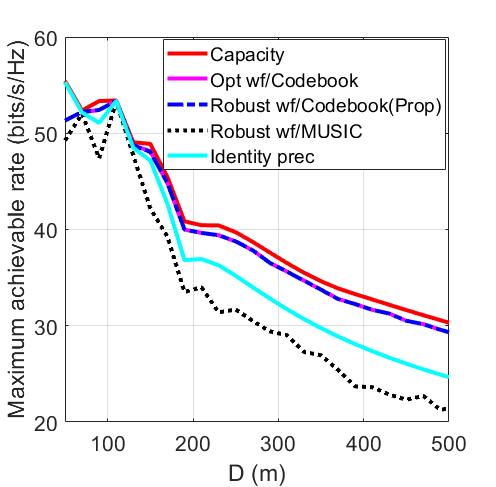}}
		\subfloat[]{\includegraphics[width=0.26\textwidth, height=0.22\textwidth]{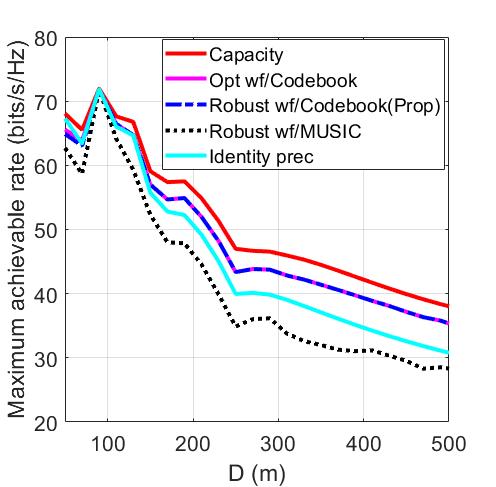}}
		\caption{ Maximum achievable rate curves of the proposed precoder and its benchmarks against $D$. The number of bits of codebook $L=8$ bits ($L_1=5$ bits and $L_2=3$ bits). (a) $N_s=4$. (b) $N_s=8$. (c) $N_s=12$. (d) $N_s=16$.}
	\end{figure}

	Fig. 8 shows the maximum achievable rate curves of the proposed precoder and its benchmarks  against $D$  when $N_s=4,8,12$ and $16$, and the codebook bits of the proposed precoder are given as $(L_1,L_2)=(5,3)$ bits. The results show that the maximum achievable rates of the proposed precoder coincide with the channel capacity when $N_s= 4$ and $8$, but when $N_s =12$ and $16$, there is a small gap between them. Because the performance of the proposed approximated water-filling power allocation is the same as that of the optimal power allocation, the approximation in computing power allocations has little effect on the performance degradation. The gap mainly comes from the quantizations in codebook design.  The gap tends to increase with the number of antennas because the greater the deviation between singular values the more sensitive the precoder performance is. Noticeable precoding gains over the identity precoder appear when the communication distance exceeds about $200$ meters. When $D=300$ meters, the proposed precoder provides about 4 bits/sec/Hz gains over the identity precoder  when $N_s\geq 12$. When $D=500$ meters, precoding gains of more than $9 \% $ appear with all numbers of antennas. 	Interestingly, the maximum achievable rate of the precoder obtained by the MUSIC algorithm is lower than that of the identity precoder. This is because the phase offsets of the center-shift angles (or DoAs) induced by  tilting of the Rx UCA result in significant performance degradation, as we discussed at the beginning of Section \rom{4}.

	\begin{figure}
		\centering
		\subfloat[]{\includegraphics[width=0.25\textwidth,height=0.25\textwidth]{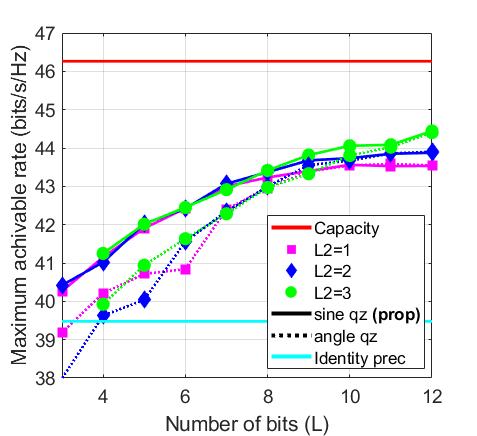}}
		\subfloat[]{\includegraphics[width=0.25\textwidth, height=0.25\textwidth]{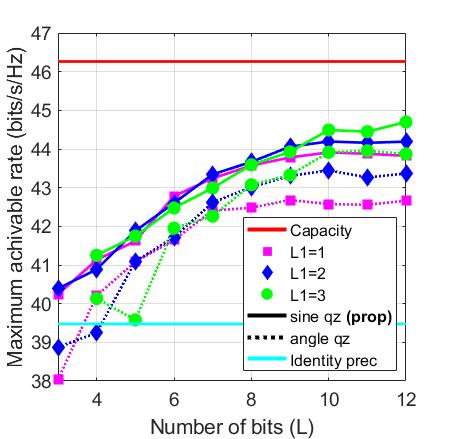}}
		
		\caption{ Maximum achievable rate curves of the proposed precoder against the number of codebook bits $L$ when $N_s=16$ and $D=300$ meters. (a) $L_2$ is fixed. (b) $L_1$ is fixed. }
	\end{figure}

	Fig. 9 shows the impacts of bit allocations and the quantization method in codebook design when $N_s=16$ and $D=300$ meters. We compare the codebook designed by the proposed quantization method in which the angles are quantized to have uniformly distributed sine values, with the codebook designed by linear quantization of the angles, and show that the proposed quantization method outperforms the linear quantization method. We also compare two types of codebook bit-allocation methods: Fig. 9(a) shows the performance of codebooks in which $L_2$ is fixed to $1,2$, and $3$ bits, respectively, and $L_1$ increases so that more bits are allocated to $\theta_{cs}$, and Fig. 9(b) shows that of the opposite case.  Comparing the codebook bit-allocations, the performance of the codebook depends on the total number of codebook bits, not the bit-allocation method. This is because, even though the angular range of $\theta_{cs}$ is much larger than that of $\phi_{cs}$,  $\theta_m$ is dominant to determine  the term $\sin(\theta_m+\theta_{cs})$ in (11); thus, $\theta_{cs}$ has a small effect on the performance compared to its degree of deviation. 	The maximum achievable rates of the proposed precoder increase linearly with $L$ when $L \leq 8$, but saturate beyond that.

	\begin{figure}
		\centering
			\subfloat[]{\includegraphics[width=0.25\textwidth, height=0.22\textwidth]{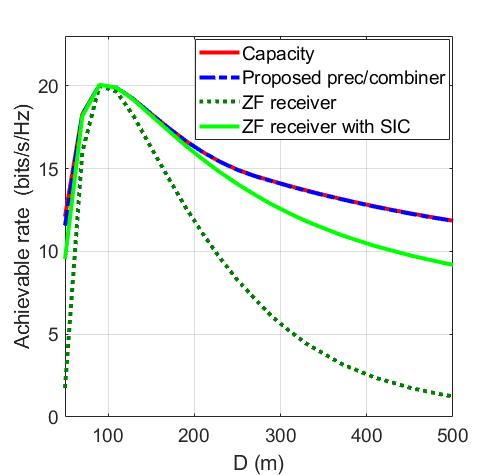}}
		\subfloat[]{\includegraphics[width=0.25\textwidth,height=0.22\textwidth]{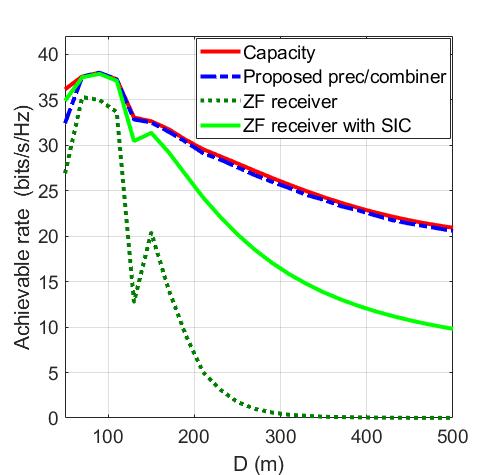}}\\
		\subfloat[]{\includegraphics[width=0.25\textwidth, height=0.22\textwidth]{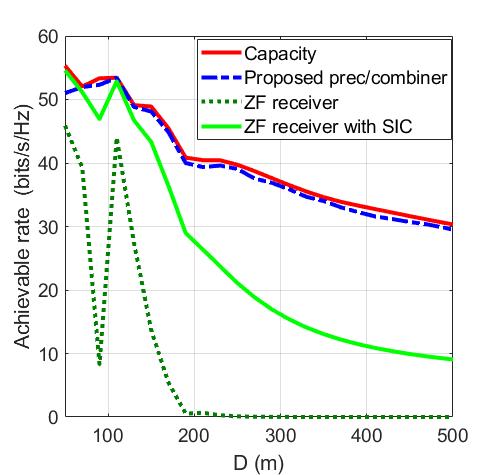}}
		\subfloat[]{\includegraphics[width=0.25\textwidth, height=0.22\textwidth]{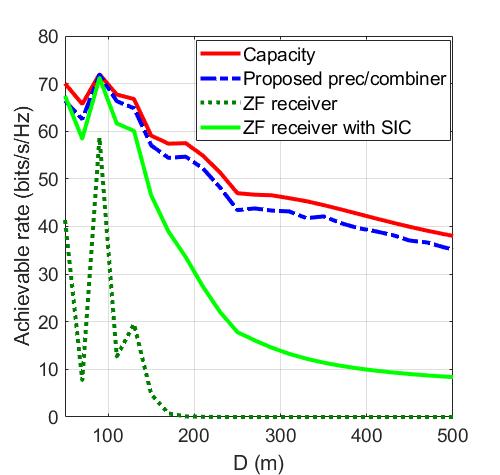}}
		\caption{ Achievable rate curves of the proposed transceiver architectures of UCA systems against $D$.  The number of bits of codebook $L=8$ bits ($L_1=5$ bits and $L_2=3$ bits). (a) $N_s=4$.  (b) $N_s=8$. (c) $N_s=12$. (d) $N_s=16$.}
	\end{figure}

	\subsection{Comparison of Proposed Transceiver Architectures}

	Fig. 10 shows the achievable rate curves of the two proposed  transceiver architectures illustrated in Fig. 7 against $D$ when $N_s\in\{4,8,12,16\}$. 	As expected, precoding gains are minor at the optimal distance ($100$ meters), and the ZF receiver achieves the channel capacity when $N_s=4$. When $N_s=8,12$, and $16$, the ZF receiver achieves the channel capacity with the help of the SIC. The ZF receiver tolerates small deviations of $D$ from its  optimal value, but the performance gap between the ZF receiver and the proposed precoder and combiner scheme is considerable when the communication distance exceeds $150$ meters.  Therefore, we propose the use of the ZF receiver when the deviation of $D$ is within a few meters; otherwise, the proposed precoder/combiner system is recommended.

	\section{Conclusion}
	
	We proposed an optimal design method and transceiver architectures for misaligned UCA systems, which can be implemented without the knowledge of misalignment angles. We derived a channel model of the misaligned UCA system, and from the derived channel model, it was  shown that the singular values of the misaligned UCA system varies with an RPDR value but is robust to other misalignments. Then, we proposed an optimal design method of UCA systems that performs a one-dimensional search of RPDR to maximize the channel capacity. A ZF receiver without precoding was suggested as the optimal transceiver architecture  for the optimally designed UCA system. For the non-optimal UCA system, a codebook-based precoder was proposed, in which the codebook is designed by quantization of the  center-shift angles and approximated power allocation. Simulation results showed that the ZF receiver achieves the channel capacity at the optimal design criteria, and when the optimal design criteria cannot be met, the proposed precoder can achieve the capacity with low feedback overhead. In future works, it would be interesting to extend this research to fast-moving scenarios, such as UAV backhaul systems or high-speed railway backhaul systems, which also have potential to leverage the robustness of misalignments of UCA-based MIMO systems over LoS channel environments.

\appendices
	
\section{Proof of Lemma 3}
\begin{figure*}[!t]
	\normalsize

\begin {equation}
\begin {split}
d(n,m)&=\{(R_t\cos{\theta_m'}-R_1\cos({\theta_n-\alpha_1}))^2+(R_t\sin{\theta_m'}-R_2\cos({\theta_n-\alpha_2})-D\sin\phi_{cs})^2+(R_3\cos({\theta_n-\alpha_3})+D\cos\phi_{cs})^2\}^{1\over2}\\
&=[R_t^2+R_r^2\{(b_{11}^2+b_{12}^2)\cos^2({\theta_n-\alpha_1})+(b_{21}^2+b_{22}^2)\cos^2({\theta_n-\alpha_2})+(b_{31}^2+b_{32}^2)\cos({\theta_n-\alpha_3})\}\\
&\quad -2R_tR_r\{\sqrt{b_{11}^2+b_{12}^2}\cos{\theta_m'}\cos({\theta_n-\alpha_1})+\sqrt{b_{21}^2+b_{22}^2}\sin{\theta_m'}\cos({\theta_n-\alpha_2})\}+2D\{R_r\sqrt{b_{21}^2+b_{22}^2}\cos({\theta_n-\alpha_2})\sin\phi_{cs}\\
&\quad+R_r\sqrt{b_{31}^2+b_{32}^2}\cos({\theta_n-\alpha_3})\cos\phi_{cs}-R_t\sin{\theta_m'}\sin\phi_{cs}+D^2\}]^{1\over2}\\
&\stackrel{(a)}{=}[R_t^2+R_r^2\{{1\over 2}(b_{11}^2+b_{12}^2+b_{21}^2+b_{22}^2+b_{31}^2+b_{32}^2)+{1\over 2}(\cos2({\theta_n-\alpha_1})+\cos2({\theta_n-\alpha_2})+\cos2({\theta_n-\alpha_3}))\}\\
&\quad -2R_tR_r\{\sqrt{b_{11}^2+b_{12}^2}\cos{\theta_m'}\cos({\theta_n-\alpha_1})+\sqrt{b_{21}^2+b_{22}^2}\sin{\theta_m'}\cos({\theta_n-\alpha_2})\}+2D\{R_r\sqrt{b_{21}^2+b_{22}^2}\cos({\theta_n-\alpha_2})\sin\phi_{cs}\\
&\quad+R_r\sqrt{b_{31}^2+b_{32}^2}\cos({\theta_n-\alpha_3})\cos\phi_{cs}-R_t\sin{\theta_m'}\sin\phi_{cs}+D^2\}]^{1\over2}\\
&\stackrel{(b)}{=}[R_t^2+{R_r^2 \over 2} \{\cos2({\theta_n-\alpha_1})+\cos2({\theta_n-\alpha_2})+\cos2({\theta_n-\alpha_3})\}\\
&\quad
\underbrace{-2R_tR_r\{\sqrt{b_{11}^2+b_{12}^2}\cos{\theta_m'}\cos({\theta_n-\alpha_1})+\sqrt{b_{21}^2+b_{22}^2}\sin{\theta_m'}\cos({\theta_n-\alpha_2})\}}_{\{1\}}\\
&\quad +2D\{R_r\sqrt{b_{21}^2+b_{22}^2}\cos({\theta_n-\alpha_2})\sin\phi_{cs}+R_r\sqrt{b_{31}^2+b_{32}^2}\cos({\theta_n-\alpha_3})\cos\phi_{cs}-R_t\sin{\theta_m'}\sin\phi_{cs}+D^2\}]^{1\over2}
\end {split}
\end {equation}
\begin{equation}
R_{\theta_{cs}}^{xy}R_{\varphi_x}^{xz}R_{\varphi_y}^{yz}=
\begin{bmatrix}
\cos{\theta_{cs}} \cos{\varphi_x}& -\cos{\theta_{cs}} \sin{\varphi_x} \sin{\varphi_y}-\sin{\theta_{cs}} \cos{\varphi_y} & -\cos{\theta_{cs}} \sin{\varphi_x} \cos{\varphi_y}+\sin{\theta_{cs}} \sin{\varphi_y}\\
\sin{\theta_{cs}} \cos{\varphi_x} & -\sin{\theta_{cs}} \sin{\varphi_x} \sin{\varphi_y}+\cos{\theta_{cs}} \cos{\varphi_y} & -\sin{\theta_{cs}} \sin{\varphi_x} \cos{\varphi_y}-\cos{\theta_{cs}} \sin{\varphi_y} \\
\sin{\varphi_x} & \cos{\varphi_x} \sin{\varphi_y} &  \cos{\varphi_x} \cos{\varphi_y}\\
\end{bmatrix} 
\end{equation}
\begin{equation}
\begin{split}
\{1\}&=-2R_tR_r\{{b_{11}}\cos{\theta_m'}\cos(\theta_n+\theta_o)+{b_{12}}\cos{\theta_m'}\sin(\theta_n+\theta_o)+{b_{21}}\sin{\theta_m'}\cos(\theta_n+\theta_o)+{b_{22}}\sin{\theta_m'}\sin(\theta_n+\theta_o)\}\\
&=-2R_tR_r\{\cos{\theta_{cs}}\cos{\varphi_x}\cos{\theta_m'}\cos({\theta_n+\theta_o})-\cos{\theta_{cs}}\sin{\varphi_x}\sin{\varphi_y}\cos{\theta_m'}\sin({\theta_n+\theta_o})\\
&\quad -\sin{\theta_{cs}}\cos{\varphi_y}\cos{\theta_m'}\sin({\theta_n+\theta_o})+\sin{\theta_{cs}}\cos{\varphi_x}\sin{\theta_m'}\cos({\theta_n+\theta_o})\\
&\quad -\sin{\theta_{cs}}\sin{\varphi_x}\sin{\varphi_y}\sin{\theta_m'}\sin({\theta_n+\theta_o})+\cos{\theta_{cs}}\cos{\varphi_y}\sin{\theta_m'}\sin({\theta_n+\theta_o})\}\\
&=-2R_tR_r\{\cos{\varphi_x}\cos({\theta_m'}-{\theta_{cs}})\cos({\theta_n+\theta_o})-\sin{\varphi_x}\sin{\varphi_y}\cos({\theta_m'}-{\theta_{cs}})\sin({\theta_n+\theta_o})+\cos{\varphi_y}\sin({\theta_m'}-{\theta_{cs}})\sin({\theta_n+\theta_o})\}\\
&=-2R_tR_r\{\cos({\theta_n}-{\theta_m'}+{\theta_o}+{\theta_{cs}})-2\sin^2{\varphi_x \over 2}\cos({\theta_m'}-{\theta_{cs}})\cos({\theta_n+\theta_o})-2\sin^2{\varphi_y \over 2}\sin({\theta_m'}-{\theta_{cs}})\sin({\theta_n+\theta_o})\\
&\quad-\sin{\varphi_x}\sin{\varphi_y}\cos({\theta_m'}-{\theta_{cs}})\sin({\theta_n+\theta_o})\}.
\end{split}
\end{equation}
\begin{equation}
\begin{split}
d(n,m)=&\Big[D^2+R_t^2+R_r^2-2R_tR_r\cos({\theta_n}-{\theta_m'}+{\theta_o}+{\theta_{cs}})
+{R_r^2\over 2}\big\{\cos2({\theta_n-\alpha_1})+\cos2({\theta_n-\alpha_2})+\cos2({\theta_n-\alpha_3})\big\}\\
&+2R_tR_r\big\{2\sin^2{\varphi_x \over 2}\cos({\theta_m'}-{\theta_{cs}})\cos({\theta_n+\theta_o})+2\sin^2{\varphi_y \over 2}\sin({\theta_m'}-{\theta_{cs}})\sin({\theta_n+\theta_o})\\
&+\sin{\varphi_x}\sin{\varphi_y}\cos({\theta_m'}-{\theta_{cs}})\sin({\theta_n+\theta_o})\big\}+2D\big\{R_2\cos({\theta_n-\alpha_2})\sin\phi_{cs}+R_3\cos({\theta_n-\alpha_3})\cos\phi_{cs}\big\}-R_t\sin{\theta_m'}\sin\phi_{cs}\Big]^{1\over2}\\
=&D\Big[1+{R_t^2+R_r^2 \over D^2}-{2R_tR_r \over D^2}\cos({\theta_n}-{\theta_m}+{\theta_o})
+{R_r^2\over 2D^2}\big\{(\cos2({\theta_n-\alpha_1})+\cos2({\theta_n-\alpha_2})+\cos2({\theta_n-\alpha_3}))\big\}\\
&+{4R_tR_r \over D^2}\big\{\sin^2{\varphi_x \over 2}\cos\theta_m\cos({\theta_n+\theta_o})+\sin^2{\varphi_y \over 2}\sin\theta_m\sin({\theta_n+\theta_o})+{1 \over 2}\sin{\varphi_x}\sin{\varphi_y}\cos\theta_m\sin({\theta_n+\theta_o})\big\}\\
&+{2 \over D}\big\{R_2\cos({\theta_n-\alpha_2})\sin\phi_{cs}+R_3\cos({\theta_n-\alpha_3})\cos\phi_{cs}-R_t\sin{\theta_m'}\sin\phi_{cs}\big\}\Big]^{1\over2}\\
=&D\big\{1+f(R_t,R_r,D,\theta_m,{\theta_n},{\theta_o},\theta_{cs},\varphi_x,\varphi_y,\phi_{cs})\big\}^{1\over2}
\end{split}
\end{equation}
\hrulefill
\end{figure*}
The distance $d(n,m)$ can be written as (20) where the equality (a) follows from $\cos^2{x}={\cos(2x)+1\over 2}$ and the equality (b) comes from the fact that $(b_{11}^2+b_{12}^2+b_{21}^2+b_{22}^2+b_{31}^2+b_{32}^2)=2$ and $b_{ij}$ is the ${(i,j)}^{th}$ element of $R_{\theta_{cs}}^{xy}R_{\varphi_x}^{xz}R_{\varphi_y}^{yz}$ given by (21). Using $\sqrt{b_{i1}^2+b_{i2}^2}\cos(\theta_n-\theta_o)={b_{i1}}\cos(\theta_n+\theta_o)+{b_{i2}}\sin(\theta_n+\theta_o)$,  the term \{1\} in (20) becomes (22) where the last equality follows from $\cos{x}=1-2\sin^2{x \over 2}$. From (22), $d(n,m)$ in (20) is rewritten as (23).

\section{Proofs of Properties 2(\textit{a})\textendash(\textit{e})}
\subsection{Proof of property 2\textit{(a)}}

Let us abbreviate $\sigma_{k}(\beta, \theta_o)$ to $\sigma_{k}$. Referring to (17), $\sigma_{k}$ is given by 
\begin{equation}
\begin{split}
\sigma_{k}
&=\bigg|\sum_{i=0}^{N_s-1}e^{-j\left({2\pi \over N_s}i(k-1)-\beta\cos{({2\pi \over N_s}i+\theta_o})\right)}\bigg|\\
&=\bigg|\sum_{i=0}^{{N_s\over2}-1}e^{-j\left({2\pi \over N_s}i(k-1)-\beta\cos{({2\pi \over N_s}i+\theta_o})\right)}\\
&\quad  +\sum_{i={N_s\over2}}^{{N_s}-1}e^{-j\left({2\pi \over N_s}i(k-1)-\beta\cos{({2\pi \over N_s}i+\theta_o})\right)}\bigg|\\
&=\bigg|\sum_{i=0}^{{N_s\over2}-1}e^{-j\left({2\pi \over N_s}i(k-1)\right)}\cdot e^{+j\left(\beta\cos{({2\pi \over N_s}i+\theta_o})\right)}\\
&\quad +\sum_{i=0}^{{N_s\over2}-1}e^{-j\left({2\pi \over N_s}i(k-1)+\pi(k-1)\right)}\cdot e^{-j\left(\beta\cos{({2\pi \over N_s}i+\theta_o})\right)}\bigg|\\
&=\bigg|\sum_{i=0}^{{N_s\over2}-1}e^{-j\left({2\pi \over N_s}i(k-1)\right)}\\
&\quad \times\Big\{e^{+j\left(\beta\cos{({2\pi \over N_s}i+\theta_o})\right)}+(-1)^{k-1}e^{-j\left(\beta\cos{({2\pi \over N_s}i+\theta_o})\right)}\Big\}\bigg|.\\
\end{split}
\end{equation}
Let $k'=N_s+2-k$, and $\sigma_{{k'}}$ is given by 
\begin{equation}
\begin{split}
\sigma_{{k'}}&=\bigg|\sum_{i=0}^{N_s-1}e^{-j\left({2\pi \over N_s}i(k'-1)-\beta\cos{({2\pi \over N_s}i+\theta_o})\right)}\bigg|\\
&=\bigg|\sum_{i=0}^{N_s-1}e^{+j\left({2\pi \over N_s}i(k-1)+\beta\cos{({2\pi \over N_s}i+\theta_o})\right)}\bigg|\\\nonumber
\end{split}
\end{equation}
\begin{equation}
\begin{split}
&=\bigg|\sum_{i=0}^{{N_s\over2}-1}e^{+j\left({2\pi \over N_s}i(k-1)+\beta\cos{({2\pi \over N_s}i+\theta_o})\right)}\\
&\quad+\sum_{i={N_s\over2}}^{{N_s}-1}e^{+j\left({2\pi \over N_s}i(k-1)+\beta\cos{({2\pi \over N_s}i+\theta_o})\right)}\bigg|\\\nonumber
\end{split}
\end{equation}
\begin{equation}
\begin{split}
&=\bigg|\sum_{i=0}^{{N_s\over2}-1}e^{+j\left({2\pi \over N_s}i(k-1)+\beta\cos{({2\pi \over N_s}i+\theta_o})\right)}\\
&\quad+\sum_{i=0}^{{N_s\over2}-1}e^{+j\left({2\pi \over N_s}i(k-1)+\pi(k-1)\right)}e^{-j\left(\beta\cos{({2\pi \over N_s}i+\theta_o})\right)}\bigg|\\
&=\bigg|\sum_{i=0}^{{N_s\over2}-1}e^{+j\left({2\pi \over N_s}i(k-1)\right)}\\
&\quad \times\Big\{e^{+j\left(\beta\cos{({2\pi \over N_s}i+\theta_o})\right)}+(-1)^{k-1}e^{-j\left(\beta\cos{({2\pi \over N_s}i+\theta_o})\right)}\Big\}\bigg|.\\
\end{split}
\end{equation}
To show that $\sigma_{{k}}=\sigma_{{k'}}$, we divide the proof into two cases for even and odd values of $k$. 
When $k$ is odd, $\sigma_{k}$ is given by
\begin{equation}
\begin{split}
\sigma_{k}&=2\bigg|\sum_{i=0}^{{N_s\over2}-1}e^{-j\left({2\pi \over N_s}i(k-1)\right)}\cos\left(\beta\cos{({2\pi \over N_s}i+\theta_o})\right)\bigg|\\
&=\bigg|\sum_{i=0}^{{N_s\over2}-1}e^{-j\left({2\pi \over N_s}i(k-1)\right)}\cos\left(\beta\cos{({2\pi \over N_s}i+\theta_o)}\right)\\
&+\sum_{i={N_s\over2}}^{N_s}e^{-j\left({2\pi \over N_s}(i-{N_s\over2})(k-1)\right)}\cos\left(\beta\cos{({2\pi \over N_s}(i-{N_s\over2})+\theta_o)}\right)\bigg|\\
&=\bigg|\sum_{i=0}^{N_s-1}e^{-j\left({2\pi \over N_s}i(k-1)\right)}\cos\left(\beta\cos{({2\pi \over N_s}i+\theta_o)}\right)\bigg|\\
&=\Bigg[\left\{\sum_{i=0}^{{N_s}-1}\cos\left({2\pi \over N_s}i(k-1)\right)\cos\left(\beta\cos{({2\pi \over N_s}i+\theta_o)}\right)\right\}^2\\
&\quad+\left\{\sum_{i=0}^{{N_s}-1}\sin\left({2\pi \over N_s}i(k-1)\right)\cos\left(\beta\cos{({2\pi \over N_s}i+\theta_o)}\right)\right\}^2\Bigg]^{1\over 2}.
\end{split}
\end{equation}
Similarly, $\sigma_{k'}$ is given by
\begin{equation}
\begin{split}
\sigma_{k'}&=2\bigg|\sum_{i=0}^{{N_s\over2}-1}e^{+j\left({2\pi \over N_s}i(k-1)\right)}\cos\left(\beta\cos{({2\pi \over N_s}i+\theta_o})\right)\bigg|\\
&=\Bigg[\left\{\sum_{i=0}^{{N_s}-1}\cos\left({2\pi \over N_s}i(k-1)\right)\cos\left(\beta\cos{({2\pi \over N_s}i+\theta_o)}\right)\right\}^2\\
&\quad+\left\{\sum_{i=0}^{{N_s}-1}\sin\left({2\pi \over N_s}i(k-1)\right)\cos\left(\beta\cos{({2\pi \over N_s}i+\theta_o)}\right)\right\}^2\Bigg]^{1\over 2}\\
&=\sigma_{k}.
\end{split}
\end{equation}
When $k$ is even, we have
\begin{equation}
\begin{split}
\sigma_{k}&=2\bigg|j\sum_{i=0}^{{N_s\over2}-1}e^{-j\left({2\pi \over N_s}i(k-1)\right)}\sin\left(\beta\cos{({2\pi \over N_s}i+\theta_o})\right)\bigg|\\
&=\bigg|j\sum_{i=0}^{{N_s}-1}e^{-j\left({2\pi \over N_s}i(k-1)\right)}\sin\left(\beta\cos{({2\pi \over N_s}i+\theta_o})\right)\bigg|
\end{split}
\end{equation}
and
\begin{equation}
\begin{split}
\sigma_{k'}&=2\bigg|j\sum_{i=0}^{{N_s\over2}-1}e^{+j\left({2\pi \over N_s}i(k-1)\right)}\sin\left(\beta\cos{({2\pi \over N_s}i+\theta_o})\right)\bigg|\\
&=\bigg|j\sum_{i=0}^{{N_s}-1}e^{+j\left({2\pi \over N_s}i(k-1)\right)}\sin\left(\beta\cos{({2\pi \over N_s}i+\theta_o})\right)\bigg|,
\end{split}
\end{equation}and the remaining proof is handled similarly to the odd case.

\subsection{Proof of property 2\textit{(b)}}
Let us consider odd and even values of $k$ separately. Referring to (26), when $k$ is odd,  $\sigma_{k}$ is upper bounded as 
\begin{equation}
\begin{split}
\sigma_{k}&=\bigg|\sum_{i=0}^{N_s-1}e^{-j\left({2\pi \over N_s}i(k-1)\right)}\cos\left(\beta\cos{({2\pi \over N_s}i+\theta_o)}\right)\bigg|\\
&\leq\sum_{i=0}^{N_s-1}\bigg|e^{-j\left({2\pi \over N_s}i(k-1)\right)}\cos\left(\beta\cos{({2\pi \over N_s}i+\theta_o)}\right)\bigg|\\
&=\sum_{i=0}^{N_s-1}\bigg|\cos\left(\beta\cos{({2\pi \over N_s}i+\theta_o)}\right)\bigg|,
\end{split}
\end{equation}
where the second inequality comes from the triangle inequality. What we want to find is a range of $\beta$ such that 
\begin{equation}
\sum_{i=0}^{N_s-1}\left|\cos\left(\beta\cos{({2\pi \over N_s}i+\theta_o)}\right)\right| \leq \left|\sum_{i=0}^{N_s-1}\cos\left(\beta\cos{({2\pi \over N_s}i+\theta_o)}\right)\right|,
\end{equation} where the upper bound is equal to $\sigma_{1}$. Because of the triangle inequality, (31) is true only when the equality holds. Thus, the range of $\beta$ is obtained as $0\leq \beta \leq \pi/2$ where $\cos\left(\beta\cos{({2\pi \over N_s}i+\theta_o)}\right)\geq 0$ for all $i$. \footnote{What we have interest is the range of $\beta$ starting from 0 to some value that we have $\sigma_{k}\leq \sigma_{1}$ for all $k\in \{2,\cdots, N_s\}$. Therefore, we ignore other valid ranges of $\beta$ which are over than $[0,\pi/2].$} Similarly,  $\sigma_{k}$ of the even value $k$ is given by
\begin{equation}
\begin{split}
\sigma_{k}&=2\bigg|j\sum_{i=0}^{{N_s\over2}-1}e^{-j\left({2\pi \over N_s}i(k-1)\right)}\sin\left(\beta\cos{({2\pi \over N_s}i+\theta_o})\right)\bigg|\\
&=\bigg|\sum_{i=0}^{N_s-1}je^{-j\left({2\pi \over N_s}i(k-1)\right)}\sin\left(\beta\cos{({2\pi \over N_s}i+\theta_o)}\right)\bigg|\\
&\leq\sum_{i=0}^{N_s-1}\bigg|je^{-j\left({2\pi \over N_s}i(k-1)\right)}\sin\left(\beta\cos{({2\pi \over N_s}i+\theta_o)}\right)\bigg|\\
&=\sum_{i=0}^{N_s-1}\bigg|\sin\left(\beta\cos{({2\pi \over N_s}i+\theta_o)}\right)\bigg|.\\
\end{split}
\end{equation}
To find the range of $\beta$ such that
\begin{equation}
\sum_{i=0}^{N_s-1}\bigg|\sin\left(\beta\cos{({2\pi \over N_s}i+\theta_o)}\right)\bigg| \leq \bigg|\sum_{i=0}^{N_s-1}\cos\left(\beta\cos{({2\pi \over N_s}i+\theta_o)}\right)\bigg|,
\end{equation} 
let $\theta_i \triangleq \beta\cos{({2\pi \over N_s}i+\theta_o)}$. To compare with the range of $\beta$ found in the case of odd value $k$, we consider $\beta \in [0, \pi/2]$. Then, (33) is rewritten as 
\begin{equation}
\sum_{i=0}^{N_s-1}|\sin\theta_i|\leq\sum_{i=0}^{N_s-1}|\cos\theta_i|,
\end{equation} where $-\pi/2\leq\theta_i\leq \pi/2$ and $\cos\theta_i\geq0$. When $0\leq \beta \leq \pi/4$, (34) is always true because $-\pi/4\leq\theta_i\leq\pi/4$; thus, $|\sin\theta_i|\leq\cos\theta_i$ for all $\theta_i$. When $ \pi/4\leq \beta \leq \pi/2$, $-\pi/2\leq\theta_i\leq\pi/2$, let $\theta_{i,1} \in \{\theta_i|-\pi/2<{2\pi \over N_s}i+\theta_o\leq \pi/2\}$ and $\theta_{i,2} \in \{\theta_i|-\pi<{2\pi \over N_s}i+\theta_o\leq -\pi/2 \quad or \quad \pi/2<{2\pi \over N_s}i+\theta_o \leq \pi\}$, respectively, and the numbers of $\theta_{i,1}$ and $\theta_{i,2}$ are $N_s/2$. Because  $|\sin\theta_i|$ is concave in each ranges, we have $\mathbb{E}(\left|\sin\theta_{i,1}\right|)+\mathbb{E}(\left|\sin{\theta_{i,2}}\right|)\leq\sin\mathbb{E}(\left|\theta_{i,1}\right|)+\sin\mathbb{E}(\left|\theta_{i,2}\right|)$. Furthermore, because $0\leq\mathbb{E}(\left|\theta_{i,1}\right|), \mathbb{E}(\left|\theta_{i,2}\right|)\leq\pi/2$, we have $\sin\mathbb{E}(\left|\theta_{i,1}\right|)+\sin\mathbb{E}(\left|\theta_{i,2}\right|)\leq 2\sin\mathbb{E}(\left|\theta_i\right|)$. Thus,
\begin{equation}
\begin{split}
\sum_{i=0}^{N_s-1}|\sin\theta_i|&\leq {N_s\over 2}(\sin\mathbb{E}(\left|\theta_{i,1}\right|)+\sin\mathbb{E}(\left|\theta_{i,2}\right|))\\
&\leq  N_s\sin\mathbb{E}(\left|\theta_i\right|).
\end{split}
\end{equation}
Similarly, $\sum_{i=0}^{N_s-1}|\cos\theta_i|\leq N_s\cos\mathbb{E}(\left|\theta_i\right|)$.
Using this, the condition (34) can be rewritten as
\begin{equation}
\begin{split}
\sin\mathbb{E}(\left|\theta_i\right|)&\leq\cos\mathbb{E}(\left|\theta_i\right|).
\end{split}
\end{equation}
To hold (36),  $0\leq\mathbb{E}(|\theta_i|) \leq {\pi \over 4}$ and the range of $\beta$ is obtained as $\pi/4\leq\beta\leq {\pi N_s \over 4\sum_{i=0}^{N_s-1} |\cos{({2\pi \over N_s}i+\theta_o})|}$. Combining all results, the objective range of $\beta$ is given by
\begin{equation}
0\leq\beta\leq {\pi N_s \over 4\sum_{i=0}^{N_s-1} |\cos{({2\pi \over N_s}i+\theta_o})|}.
\end{equation}
\subsection{Proof of property 2\textit{(c)}}
Let us assume that $\theta_o$ is fixed at some value. Then, $\sigma_{k}(\beta, \theta_o)$ can be written as $\sigma_{k}(\beta)$. Because $\sigma_{k}(\beta)$ is a composition function of  a sum of continuous sinusoidal functions and an absolute value function which are continuous functions, it is continuous. Therefore, if $\beta\rightarrow 0$ then $\sigma_{k}(\beta)\rightarrow \sigma_{k}(0)$. 
When $\beta = 0$, we have
\begin{equation}
\begin{split}
\sigma_{k}(0)&=\bigg|\sum_{i=0}^{N_s-1}e^{-j\{{2\pi \over N_s}i(k-1)\}}\bigg|\\
\end{split}
\end{equation}
and it can be easily seen that $\sigma_{k}(0)=N_s$ when $k=1$. Otherwise, the sum in (38) is a finite geometric series which is given by
\begin{equation}
\begin{split}
\sigma_{k}(0)&=\bigg|{ {1-[e^{-j2\pi(k-1)/N_s}]^{N_s}}\over{1-{ e^{-j2\pi(k-1)/N_s}}}}\bigg|\\
&=\bigg|{ {1-e^{-j2\pi(k-1)}}\over{1-{ e^{-j2\pi(k-1)/N_s}}}}\bigg|.
\end{split}
\end{equation}
Therefore, $\sigma_{k}(0) = 0$ for $k=2,\cdots, N_s$.

\subsection{Proof of property 2\textit{(d)}}

Let $i={N_s}-1-i'$ where ${i, i'\in \{0,\cdots, N_s-1\}}$. Referring to (26), when $k$ is odd, $\sigma_{k}(\beta, -\theta_o)$ is given by
\begin{equation}
\begin{split}
\sigma_{k}(\beta,&-\theta_o)=\bigg|\sum_{i=0}^{N_s-1}e^{-j\left({2\pi \over N_s}i(k-1)\right)}\cos\left(\beta\cos{({2\pi \over N_s}i-\theta_o)}\right)\bigg|\\
&=\bigg|\sum_{i'=0}^{N_s-1}e^{-j\left({2\pi \over N_s}({N_s}-1-i')(k-1)\right)}\\
&\quad\times \cos\left(\beta\cos{({2\pi \over N_s}({N_s}-1-i')-\theta_o)}\right)\bigg|\\
&=\bigg|\sum_{i'=0}^{N_s-1}e^{+j\left({2\pi \over N_s}(i'+1)(k-1)\right)}\cos\left(\beta\cos{({2\pi \over N_s}(i'+1)+\theta_o)}\right)\bigg|\\
&\stackrel{(a)}{=}\left|\sum_{i'=0}^{N_s-1}e^{+j\left({2\pi \over N_s}i'(k-1)\right)}\cos\left(\beta\cos{({2\pi \over N_s}i'+\theta_o)}\right)\right|\\
&=\bigg|\sum_{i'=0}^{{N_s}-1}\cos\left({2\pi \over N_s}i'(k-1)\right)\cos\left(\beta\cos{({2\pi \over N_s}i'+\theta_o)}\right)\\
&+j\sum_{i'=0}^{{N_s}-1}\sin\left({2\pi \over N_s}i'(k-1)\right)\cos\left(\beta\cos{({2\pi \over N_s}i'+\theta_o)}\right)\bigg|\\
&=\Bigg[\bigg\{\sum_{i'=0}^{{N_s}-1}\cos\left({2\pi \over N_s}i'(k-1)\right)\cos\left(\beta\cos{({2\pi \over N_s}i'+\theta_o)}\right)\bigg\}^2\\
&+\bigg\{\sum_{i'=0}^{{N_s}-1}\sin\left({2\pi \over N_s}i'(k-1)\right)\cos\left(\beta\cos{({2\pi \over N_s}i'+\theta_o)}\right)\bigg\}^2\Bigg]^{1\over 2}\\
&=\sigma_{k}(\beta,\theta_o)
\end{split}
\end{equation} where the equality (a) comes from the fact that
\begin{equation}\begin{split} &e^{-j\left({2\pi \over N_s}(N_s)(k-1)\right)}\cos\left(\beta\cos{({2\pi \over N_s}(N_s)-\theta_o)}\right)\\&=e^{-j\left({2\pi \over N_s}(0)(k-1)\right)}\cos\left(\beta\cos{({2\pi \over N_s}(0)-\theta_o)}\right).\nonumber\end{split}\end{equation}
Similarly, when $k$ is even, we get
\begin{equation}
\begin{split}
\sigma_{k}(\beta,&-\theta_o)=\bigg|j\sum_{i'=0}^{N_s-1}e^{+j\left({2\pi \over N_s}i'(k-1)\right)}\sin\left(\beta\cos{({2\pi \over N_s}i'+\theta_o)}\right)\bigg|\\
&=\bigg|j\sum_{i'=0}^{{N_s}-1}\cos\left({2\pi \over N_s}i'(k-1)\right)\sin\left(\beta\cos{({2\pi \over N_s}i'+\theta_o)}\right)\\
&-\sum_{i'=0}^{{N_s}-1}\sin\left({2\pi \over N_s}i'(k-1)\right)\sin\left(\beta\cos{({2\pi \over N_s}i'+\theta_o)}\right)\bigg|\\
&=\Bigg[\left\{\sum_{i'=0}^{{N_s}-1}\cos\left({2\pi \over N_s}i'(k-1)\right)\sin\left(\beta\cos{({2\pi \over N_s}i'+\theta_o)}\right)\right\}^2\\
&+\left\{\sum_{i'=0}^{{N_s}-1}\sin\left({2\pi \over N_s}i'(k-1)\right)\sin\left(\beta\cos{({2\pi \over N_s}i'+\theta_o)}\right)\right\}^2\Bigg]^{1\over 2}\\
&=\sigma_{k}(\beta,\theta_o).
\end{split}
\end{equation}
\subsection{Proof of property 2\textit{(e)}}
When $k=N_s/2+1$, $\sigma_{k}(\beta,\theta_o)$ is decomposed into odd and even values of $i$ as
\begin{equation}
\begin{split}
\sigma_{{N_s/2+1}}(\beta,\theta_o)&=\bigg|\sum_{i=0}^{N_s-1}e^{-j{\pi}i}e^{+j\left(\beta\cos{({2\pi \over N_s}i+\theta_o})\right)}\bigg|\\
&=\bigg|\sum_{i'=0}^{N_s/2-1}\underbrace{e^{+j\left(\beta\cos{({2\pi \over N_s}(2i')+\theta_o})\right)}}_{even \ i}\\
&\quad-\underbrace{e^{+j\left(\beta\cos{({2\pi \over N_s}(2i'+1)+\theta_o})\right)}}_{odd \ i}\bigg|.
\end{split}
\end{equation}
Let $i'={N_s\over2}-1-i''$, then the term of odd value $i$ in (42) can be written as
\begin{equation}
\begin{split}
\cos{({2\pi \over N_s}(2i'+1)+{\theta_o}})&=\cos{({2\pi \over N_s}(2({N_s\over2}-1-i'')+1)+{\theta_o}})\\
&=\cos{(2\pi-{2\pi \over N_s}(2i''+1)+{\theta_o}})\\
&=\cos{({2\pi \over N_s}(2i''+1)-{\theta_o}}).\\
\end{split}
\end{equation}
Using this, (42) can be rewritten as 
\begin{equation}
\begin{split}
\sigma_{{N_s/2+1}}(\beta,\theta_o)&=\bigg|\sum_{i'=0}^{N_s/2-1}e^{+j\left(\beta\cos{({2\pi \over N_s}(2i')+\theta_o})\right)}\\
&-\sum_{i''=0}^{N_s/2-1}e^{+j\left(\beta\cos{({2\pi \over N_s}(2i''+1)-\theta_o})\right)}\bigg|.
\end{split}
\end{equation}
Therefore, when $\theta_o=\pi/N_s$, 
\begin{equation}
\begin{split}
\sigma_{c_{N_s/2+1}}(\beta,\pi/N_s)&=\bigg|\sum_{i'=0}^{N_s/2-1}e^{+j\left(\beta\cos{({2\pi \over N_s}(2i')+{\pi \over N_s}})\right)}\\
&-\sum_{i''=0}^{N_s/2-1}e^{+j\left(\beta\cos{({2\pi \over N_s}(2i'')+{\pi \over N_s}})\right)}\bigg|\\
&=0
\end{split}
\end{equation}
and the same result can be achieved when $\theta_o=-\pi/N_s$ because of the property 2\textit{(d)}.



\ifCLASSOPTIONcaptionsoff
  \newpage
\fi

\end{document}